\documentclass[conference]{IEEEtran}
\IEEEoverridecommandlockouts
\usepackage{amsmath}
\usepackage{color}
\usepackage{subfigure}
\usepackage{array}
\usepackage{pifont}
\usepackage{{inputenc}}
\usepackage{balance}
\usepackage{multirow,tabularx}
\usepackage{booktabs,longtable}
\usepackage[ruled,vlined]{algorithm2e}
\usepackage{algpseudocode}
\usepackage{graphicx}
\usepackage{bbm}
\usepackage{enumitem}
\usepackage{verbatim}
\usepackage{amsfonts}
\usepackage{hyperref}
\usepackage{enumitem}
\usepackage{xcolor}
\usepackage{xspace}
\usepackage{multirow}
\usepackage{ntheorem}
\newtheorem{definition}{Definition}
\newtheorem{theorem}{Theorem}
\newtheorem*{proof}{Proof}
\usepackage{dsfont}

\newcommand{\var}{\mathop{\mathrm{Var}}}

\newcommand{\ie}{\emph{i.e., }}
\newcommand{\eg}{\emph{e.g., }}
\newcommand{\etal}{\emph{et al. }}

\newcommand{\etc}{\emph{etc.}}
\newcommand{\wrt}{\emph{w.r.t. }}

\hypersetup{
  colorlinks,
  linkcolor={blue!70!green},
  citecolor={green!70!blue},
  urlcolor={blue!70!green}
}

\usepackage{cleveref}

\newcommand{\mypara}[1]{\noindent\textbf{#1.} \xspace}

\newcommand{\B}[1]{\textbf{#1}}
\newcommand{\mymethod}{\ensuremath{\mathsf{RetraSyn}}\xspace}
\newcommand{\mymethodb}{\ensuremath{\mathsf{RetraSyn\textsuperscript{b}}}\xspace}
\newcommand{\mymethodu}{\ensuremath{\mathsf{RetraSyn\textsuperscript{p}}}\xspace}
\newcommand{\dmu}{\ensuremath{\mathsf{DMU}}\xspace}
\newcommand{\nosampling}{\ensuremath{\mathsf{AllUpdate\textsuperscript{b}}}\xspace}
\newcommand{\nosamplingu}{\ensuremath{\mathsf{AllUpdate\textsuperscript{p}}}\xspace}
\newcommand{\noeq}{\ensuremath{\mathsf{NoEQ\textsuperscript{b}}}\xspace}
\newcommand{\noequ}{\ensuremath{\mathsf{NoEQ\textsuperscript{p}}}\xspace}

\newcommand{\ldpids}{\ensuremath{\mathsf{LDP\text{-}IDS}}\xspace}

\def\BibTeX{{\rm B\kern-.05em{\sc i\kern-.025em b}\kern-.08em
    T\kern-.1667em\lower.7ex\hbox{E}\kern-.125emX}}
\begin{document}

\title{Real-Time Trajectory Synthesis with Local Differential Privacy
}

\author{
Yujia Hu$^{\dagger}$, Yuntao Du$^{\sharp}$, Zhikun Zhang$^{\ddag}$, Ziquan Fang$^{\dagger}$, Lu Chen$^{\dagger}$, Kai Zheng$^{\S}$, Yunjun Gao$^{\dagger}$\\
\normalsize $^{\dagger}$\emph{Zhejiang University, China}~~~~\hspace*{0.1in}
\normalsize $^{\sharp}$\emph{Purdue University, USA}~~~~\hspace*{0.1in}
\normalsize $^{\ddag}$\emph{Stanford University, USA}\\
\normalsize $^{\S}$\emph{University of Electronic Science and Technology of China, China}\\
$^{\dagger}$\emph{\{charliehu, zqfang, luchen, gaoyj\}@zju.edu.cn}~~\hspace*{0.1in} ~~$^{\sharp}$\emph{ytdu@purdue.edu}~~\hspace*{0.1in}$^{\ddag}$\emph{zhikun@stanford.edu}~~\hspace*{0.1in}$^{\S}$\emph{zhengkai@uestc.edu.cn} \\
}

\maketitle

\begin{abstract}
Trajectory streams are being generated from location-aware devices, such as smartphones and in-vehicle navigation systems.
Due to the sensitive nature of the location data, directly sharing user trajectories suffers from privacy leakage issues. 
\textit{Local differential privacy} (LDP), which perturbs sensitive data on the user side before it is shared or analyzed, emerges as a promising solution for private trajectory stream collection and analysis. 
Unfortunately, existing stream release approaches often neglect the rich spatial-temporal context information within trajectory streams, resulting in suboptimal utility and limited types of downstream applications. 
To this end, we propose \mymethod, a novel real-time trajectory synthesis framework, which is able to perform on-the-fly trajectory synthesis based on the mobility patterns privately extracted from users' trajectory streams. Thus, the downstream trajectory analysis can be performed on the high-utility synthesized data with privacy protection.  We also take the genuine behaviors of real-world mobile travelers into consideration, ensuring authenticity and practicality.
The key components of \mymethod include the global mobility model, dynamic mobility update mechanism, real-time synthesis, and adaptive allocation strategy.
We conduct extensive experiments on multiple real-world and synthetic trajectory datasets under various location-based utility metrics, encompassing both streaming and historical scenarios. 
The empirical results demonstrate the superiority and versatility of our proposed framework. 

\end{abstract}

\begin{IEEEkeywords}
Local differential privacy, trajectory streams
\end{IEEEkeywords}

\section{Introduction} 
\label{sec:introduction}

Massive trajectory streams are being generated from location-aware devices, such as GPS sensors in smartphones.
A \textit{trajectory stream}, which is formed by continuously reported locations, plays a crucial role in real-time applications such as traffic monitoring~\cite{traffic_monitoring}, emergency response~\cite{emergency_response}, location-based services~\cite{urbancomputing}, etc. For instance, in traffic management, trajectory streams continuously generated by vehicles can be used to monitor the dynamic traffic flow and perform real-time congestion prediction~\cite{traffic_monitoring}. 

While trajectory streams have great potential in real-life applications, privacy issues arise during trajectory stream collection. Since the data collector is not always trusted, sensitive information about an individual's locations may be revealed by adversaries~\cite{deanonymization,membership1,membership2,ccs18_adatrace}, which limits the practical applications of trajectory stream analysis.
\textit{Local differential privacy} (LDP) has emerged as the \textit{de facto} standard for private data collection with rigorous mathematical guarantees. Given users' sensitive information, LDP defines various randomized algorithms to perturb the original data, such that attackers cannot distinguish the individual inputs given the perturbed outputs.
Afterward, the perturbed outputs can be aggregated and published safely for downstream analysis. LDP provides a promising solution for data sharing without relying on any trusted data curator and thereby has been employed by many companies (\eg Google~\cite{google_rappor} and Microsoft~\cite{microsoft_nips}).

Recently, LDP has been applied to handle the general streaming data publication tasks~\cite{sigmod22_ldpids}. However, it treats trajectory streams as ordinary statistical streams and focuses on general statistical tasks, such as count queries and mean estimation. 
The general idea is to perturb each \textit{location} in the trajectory stream independently and publish aggregated statistics with histograms, which leaves the rich spatial-temporal information in trajectory streams unexplored. 
However, the spatial-temporal behaviors are indispensable for practical trajectory data release~\cite{vldb23_ldptrace,vldb21_ngram,vldb23_direction}, and the failure to preserve these characteristics significantly hampers its utility. Moreover, the above task-dependent solution cannot deal with arbitrary downstream location-based tasks, which limits its usability.

Researchers also developed other trajectory-aware frameworks~\cite{vldb23_ldptrace,vldb21_ngram,vldb23_direction} under LDP, which aim to capture the spatial-temporal patterns and publish complete trajectories as a safe substitute for the original dataset.
However, they typically perform one-time releases for \textit{historical trajectories}. In streaming scenarios, locations are sequentially reported; the reliance on \textit{historical features} (\eg full trajectory length~\cite{vldb23_direction,vldb23_ldptrace,vldb21_ngram}) prevents them from real-time processing. 

 The above concerns motivate us to propose \mymethod, an effective \textit{real-time} synthesis framework tailored to spatial-temporal trajectory streams with the protection of LDP. The challenges are three-fold.

 \textit{Challenge I: How to dynamically release trajectory streams for arbitrary downstream tasks?} Real-world location-based analysis tasks are not limited to rudimentary statistical queries. The complexity of many tasks necessitates the extraction of spatial-temporal patterns inherent in trajectory streams. Simultaneously, the imperative of real-time responsiveness brings additional challenges. Existing solutions either ignore the spatial-temporal context in trajectory streams or are not applicable in streaming scenarios, which fail to achieve these two features concurrently. To this end, \mymethod constructs a global mobility model by aggregating users' perturbed \textit{transition states} at each timestamp, which takes the continuous movement patterns into consideration. It also employs a synthesis-based framework to dynamically generate synthetic trajectories that align with the current learned spatial-temporal patterns. Thus, our \mymethod is able to perform real-time releasing and achieve high versatility in solving various downstream tasks with rigorous statistical privacy.

\textit{Challenge II: How to accurately capture the dynamic spatial-temporal patterns in a real-time manner?}  
 Unlike existing trajectory-aware frameworks that consider static historical trajectories, locations are continuously reported in real-time scenarios and should also be published sequentially. 
 Moreover, for high-dimensional data release, different dimensions may exhibit different trends. Existing LDP streaming data publish strategies~\cite{sigmod22_ldpids} treat dimensions equally, neglecting such fine-grained dynamics, which leads to unnecessary perturbation noise.
To facilitate real-time processing, a dynamic mobility update (\dmu) mechanism is proposed to adjust the global mobility model on the fly. 
By evaluating perturbation noise and the dimensional diversity, \dmu employs an optimization-based strategy to selectively update the most informative parts of the mobility model at each timestamp.
Through its ability to recognize dimension differences, the \dmu mechanism can accurately capture the changing patterns (\ie significant transitions) and reduce perturbation noise.

\textit{Challenge III: How to better preserve utility and authenticity in the dynamic setting?} 
In real-world scenarios, the number of trajectory streams may vary over time (users turn on/off the tracking devices), and the entering/quitting status of each trajectory is also undetermined. This brings unique challenges in generating authentic trajectories that reflect these real-world dynamics. Additionally, the intricate dynamics of real-world trajectory streams also bring challenges to allocation strategy, which is also a crucial part of streaming data release. Existing LDP streaming allocation strategies~\cite{sigmod22_ldpids} rely on a fixed active user set to determine the appropriate size of report users, which is impractical in this realistic setting and hampers its utility.  To mitigate this, \mymethod integrates entering and quitting events into our global mobility model to emulate the behaviors of genuine users. In our allocation strategies, we maintain a dynamic active user set and explore new portion-based allocation approaches to ensure sufficient utilization of the privacy budget/report users under realistic dynamic scenarios.


In summary, the \mymethod framework consists of four components: global mobility model, dynamic mobility update (\dmu), real-time synthesis, and adaptive allocation strategy. 
At each timestamp, users' mobility patterns are perturbed and aggregated to construct the global mobility model. Afterward, the \dmu mechanism selects \textit{significant patterns} and dynamically updates the global mobility model. Finally, a Markov-based probabilistic model is applied to generate synthetic trajectories that align with the current updated spatial-temporal patterns. To appropriately distribute the privacy budget/report users to each timestamp, we introduce different portion-based adaptive allocation strategies based on budget division and population division.
Our real-time trajectory synthesis process is locally differentially private, meaning that the global moving patterns (as well as the synthetic trajectories) are not strongly dependent on any specific user at any timestamp. 

The main contributions of this paper are four-fold:
\begin{itemize}[leftmargin=*]
    \item We propose \mymethod, the \textit{first} locally differentially private trajectory synthesis framework designed for trajectory streams.
    It can perform real-time trajectory generation based on mobility patterns while protecting users' sensitive data.
    \item We construct an effective global mobility model to capture the complex spatial-temporal contexts inherent in trajectory streams, and propose a dynamic mobility update mechanism to mimic the dynamics of real trajectories over time. 
    \item To synthesize realistic trajectories, we explore the entering and quitting behaviors of traces to better capture the evolution of the trajectory stream. 
    Different privacy budget allocation strategies are considered under complex real-world scenarios where active users vary across timestamps.
    \item We conduct comprehensive experiments on both synthetic and real-world datasets. Our evaluation metrics encompass both streaming data analysis and history data analysis. 
    The results demonstrate our superior performance and versatility.
\end{itemize}

\mypara{Roadmap}
The paper is structured as follows.
\autoref{sec:preliminaries} provides foundational knowledge on LDP and the streaming setting. 
\autoref{sec:method} introduces the technical details of \mymethod and theoretical analysis is presented in \autoref{sec:analysis}. 
The empirical evaluation of \mymethod is in \autoref{sec:experiments}. \autoref{sec:related_work} reviews existing literature on stream release and historical trajectory publication.  We finally conclude the paper in \autoref{sec:conclusion}.

\section{Preliminaries} \label{sec:preliminaries}
In this section, we first introduce the concepts of LDP and LDP for streaming data. Then, we formalize our problem.

\subsection{Local Differential Privacy}

Local differential privacy (LDP) is a strong privacy-preserving paradigm that offers a provable mathematical guarantee. In LDP, a \textit{data curator} seeks to collect sensitive information from a large number of \textit{users}. To preserve privacy, each user independently applies local perturbations to their data values and subsequently reports the noisy output to the curator for aggregation. 

Formally, let $\Psi$ be a randomized mechanism that takes the original data value $x$ of each user and outputs the perturbed value $\Psi(x)$. The $\epsilon$-LDP notion is formulated as follows:
\begin{definition}[$\epsilon$-Local Differential Privacy]
An algorithm $\Psi(\cdot)$ satisfies $\epsilon$-local differential privacy ($\epsilon$-LDP), where $\epsilon \ge 0 $, if and only if for any input $x_1$, $x_2$ and output $y$:
    \begin{equation}
        Pr[\Psi(x_1)=y]\leq e^\epsilon Pr[\Psi(x_2)=y].
    \end{equation}
    
\end{definition}
\noindent
Here, parameter $\epsilon$ (i.e., the \textit{privacy budget}) serves as a metric for quantifying the probability that an adversary can discern the input value based on the output. Thus, a smaller value of $\epsilon$ corresponds to a stronger privacy guarantee. 

\mypara{Frequency Oracle (FO)}
It is the most basic task in LDP, which entails estimating the frequency of a given value $x$ within a specific domain $\mathcal{D}$, and can serve as the building block for other more complex tasks.
In this paper, we adopt the \textit{optimized unary encoding} (OUE) mechanism as the FO protocol since it has optimal variance~\cite{usenix17_ldp}, leveraging the estimation outcome for subsequent analysis. 
Concretely, for each original value $x$ contributed by a user, it is first encoded into a length-$|\mathcal{D}|$ one-hot vector $V$, wherein only the $x$-th bit is set to 1. Subsequently, each user reports the perturbed encoding vector as follows:
\begin{equation}\label{equ:OUE-perturb}
        Pr[\hat{V}[i]=1] =
        \begin{cases}
            \frac{1}{2}, & \text{if } V[i] = 1\\
            \frac{1}{e^\epsilon + 1}, &\text{if } V[i] = 0,
        \end{cases}
    \end{equation}
where $\epsilon$ is the privacy budget and $\hat{V}$ is the reported noisy vector. On the curator side, the reported frequency $f'(x)$ is initially calculated by counting the vectors whose $x$-th bit is 1. Then, with $n$ denoting the number of participant users and $q=1/(e^\epsilon + 1)$, the frequency is adjusted as $\hat{f}(x)=(f'(x)/n-q)/(1/2-q)$. It has been proved~\cite{usenix17_ldp} that $\hat{f}(x)$ constitutes an unbiased estimation of the actual frequency with variance:
\begin{equation}\label{eq:var_oue}
    \var(\epsilon,n) = \frac{4e^\epsilon}{n(e^\epsilon-1)^2}
\end{equation}


\mypara{Composition}
To make the entire process adhere to the requirements of LDP, we rely on the fundamental properties of LDP~\cite{icmd18_cormode}:
\begin{theorem}[Sequential Composition]\label{theo:sequential_composition}
    Let $\Psi_1,\cdots,\Psi_k$ be a set of randomized mechanisms, where $\Psi_i$ satisfies $\epsilon_i$-LDP. Then, combining all the above subroutines with independent randomness results in a mechanism $\Psi$ satisfies $\sum_i^k\epsilon_i$-LDP.
\end{theorem}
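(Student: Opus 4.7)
The plan is to unfold the definition of $\epsilon$-LDP applied to the combined mechanism $\Psi$ and then exploit the independence of the randomness used across the $k$ subroutines. I would view $\Psi(x) = (\Psi_1(x), \ldots, \Psi_k(x))$ as a single randomized function whose output lies in the product output space. The goal then reduces to showing, for any inputs $x_1, x_2$ and any output tuple $y = (y_1, \ldots, y_k)$, that $\Pr[\Psi(x_1) = y] \le e^{\sum_{i=1}^k \epsilon_i}\,\Pr[\Psi(x_2) = y]$, which is exactly the $\sum_i \epsilon_i$-LDP condition.

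First, I would use the hypothesis that the subroutines draw their noise from independent sources to factor the joint probability as $\Pr[\Psi(x) = y] = \prod_{i=1}^k \Pr[\Psi_i(x) = y_i]$. Next, I would invoke the per-mechanism guarantee that $\Psi_i$ satisfies $\epsilon_i$-LDP to bound each factor coordinate-wise by $\Pr[\Psi_i(x_1) = y_i] \le e^{\epsilon_i}\,\Pr[\Psi_i(x_2) = y_i]$. Multiplying these $k$ inequalities and re-applying the independence factorization yields
\[
\Pr[\Psi(x_1) = y] \;\le\; \prod_{i=1}^k e^{\epsilon_i}\,\Pr[\Psi_i(x_2) = y_i] \;=\; e^{\sum_{i=1}^k \epsilon_i}\,\Pr[\Psi(x_2) = y],
\]
which completes the argument.

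There is essentially no hard step in this proof; the only subtlety is that the factorization of the joint probability is licensed precisely by the \emph{independent randomness} assumption. If the subroutines shared randomness, the multiplicative aggregation of privacy losses would no longer be valid. For output spaces that are continuous, I would replace point probabilities by working with arbitrary measurable rectangles $S_1 \times \cdots \times S_k$ and integrating, but the structure of the argument is unchanged. I would present the discrete version for clarity, since the paper's $\epsilon$-LDP definition is already phrased in terms of point probabilities $\Pr[\Psi(x) = y]$.
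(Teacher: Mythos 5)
Your proof is correct: factoring the joint output probability via the independent-randomness assumption, bounding each factor by the per-mechanism $\epsilon_i$-LDP guarantee, and multiplying gives exactly the $\sum_i \epsilon_i$-LDP bound, which is the standard argument for this result. Note that the paper does not prove this theorem itself --- it states it as a known property of LDP with a citation --- so there is no in-paper proof to diverge from; your argument matches the standard one the citation relies on, and it correctly identifies independence as the step licensing the product factorization (the adaptive variant would instead condition on earlier outputs, but the stated theorem does not require that).
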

\begin{theorem}[Post-Processing]\label{theo:post_processing}
    Post-processing the output of an LDP algorithm will not introduce additional privacy loss.
\end{theorem}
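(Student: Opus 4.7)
The plan is to show that for any (possibly randomized) post-processing function $f$ applied to the output of an $\epsilon$-LDP mechanism $\Psi$, the composite mechanism $f \circ \Psi$ still satisfies the $\epsilon$-LDP inequality from Definition~1. First I would fix two arbitrary inputs $x_1, x_2$ and an arbitrary output $z$ in the range of $f$, and then decompose the probability $Pr[f(\Psi(x))=z]$ by conditioning on the intermediate output $y$ of $\Psi$. Writing
\begin{equation}
Pr[f(\Psi(x))=z] = \sum_y Pr[f(y)=z]\cdot Pr[\Psi(x)=y],
\end{equation}
gives a clean separation between the part depending on the sensitive input (only the factor $Pr[\Psi(x)=y]$) and the part that is independent of it (the factor $Pr[f(y)=z]$, since the randomness of $f$ is drawn independently of $x$).

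Next I would apply the $\epsilon$-LDP guarantee on $\Psi$ pointwise to each term, \ie $Pr[\Psi(x_1)=y] \le e^{\epsilon}\,Pr[\Psi(x_2)=y]$, and factor out $e^{\epsilon}$ from the sum. Since $Pr[f(y)=z] \ge 0$, the inequality is preserved termwise, so that
\begin{equation}
\sum_y Pr[f(y)=z]\cdot Pr[\Psi(x_1)=y] \le e^{\epsilon} \sum_y Pr[f(y)=z]\cdot Pr[\Psi(x_2)=y],
\end{equation}
which recombines into $Pr[f(\Psi(x_1))=z] \le e^{\epsilon}\,Pr[f(\Psi(x_2))=z]$, matching the LDP definition.

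The only subtlety, and what I would flag as the main (mild) obstacle, is making sure the argument covers \emph{randomized} post-processing and, if needed, continuous output spaces. For randomization, the key observation used above is that the randomness of $f$ is drawn independently of the input $x$, so the conditional distribution $Pr[f(y)=z\mid \Psi(x)=y]$ reduces to $Pr[f(y)=z]$; this is what enables the clean factorization. For continuous or measurable output spaces, the same argument goes through by replacing the sum over $y$ with an integral against the output measure of $\Psi$ and replacing pointwise probabilities with Radon–Nikodym densities, which does not change the structure of the bound. No further composition machinery or auxiliary lemmas are needed, so once the decomposition step is written out carefully, the rest is a one-line application of the LDP inequality.
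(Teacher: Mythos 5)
Your argument is correct: conditioning on the intermediate output $y$, using the independence of the post-processing randomness from the sensitive input to factor $Pr[f(y)=z]$ out of each term, applying the $\epsilon$-LDP bound pointwise, and recombining is exactly the standard proof of the post-processing property (with the sum replaced by an integral against densities in the continuous case, as you note). The paper itself does not prove this statement --- it invokes it as a known fundamental property of LDP with a citation --- so there is no alternative in-paper argument to compare against; your proof is the canonical one and fills that gap correctly.
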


\subsection{LDP for Streaming Data}
In the stream setting, the emergence of data at each timestamp can be viewed as a sequence of \textit{events}. To address the privacy concerns in this dynamic context, the concepts of \textit{event-level} privacy and \textit{user-level} privacy are first proposed. 
Event-level privacy protects individual timestamps within a data stream, which may not provide sufficient privacy guarantees in realistic scenarios. 
In contrast, user-level privacy aims to conceal all timestamps of a data stream, offering stronger privacy assurances. However, it is not suitable for infinite streams, where it necessitates an infinite amount of perturbation. To strike a balance between event-level privacy and user-level privacy,  \textit{$w$-event privacy} is proposed to protect arbitrary $w$ consecutive timestamps in a stream. 

We begin by introducing several basic conceptions of data streams within the LDP setting. Let $T=\{c_1, c_2,\cdots\}$ represent a user's data stream, and $T[i]=c_i$ corresponds to the value at timestamp $i$. The \textit{stream prefix} $T_t$ is defined as $T_t=\{c_1,c_2,\cdots,c_t\}$, representing the sequence of values up to timestamp $t$. Building upon this, we can now elucidate the concept of $w$-neighboring:
\begin{definition}[$w$-neighboring~\cite{sigmod22_ldpids}]
    Two stream prefixes $T_t$, $T'_t$ are $w$-neighboring, if for each $T_t[i_1]$, $T_t[i_2]$, $T'_t[i_1]$, $T'_t[i_2]$ with $i_1\le i_2$, $T_t[i_1]\ne T'_t[i_1]$ and $T_t[i_2]\ne T'_t[i_2]$, it holds that $i_2-i_1+1\le w$.
\end{definition}
Two stream prefixes are $w$-neighboring means all their pairwise unequal values can fit in a window of up to $w$ timestamps.

\begin{definition}[$w$-event LDP~\cite{sigmod22_ldpids}]
    An algorithm $\Psi$ that takes a stream prefix $T_t=\{c_1,\cdots,c_t\}$ as input satisfies $w$-event $\epsilon$-LDP if for any $w$-neighboring stream prefixes $T_t$, $T'_t$ and all $t$, the output $y$ satisfies:
    \begin{equation}
        Pr[\Psi(T_t)=y]\le e^\epsilon Pr[\Psi(T'_t)=y].
    \end{equation}
\end{definition}
The $w$-event LDP can protect any sliding window of size $w$ for each user. When $w=1$, it will degenerate to event-level privacy; if $w$ is set as the length of a finite data stream, $w$-event privacy will converge towards user-level privacy.

\subsection{Problem Formulation}
Consider there is a number of mobile travelers who continuously report their locations to a curator for further analysis. 
Due to privacy concerns, the curator privately gathers the sensitive raw data at each timestamp and maintains a dynamically updated synthetic dataset. The synthetic dataset should retain a similar spatial-temporal distribution with the original trajectory streams and serve as a secure substitute for the original database. 

\begin{definition}[Private Trajectory Stream Synthesis]
Given the original database $\mathcal{T}_{orig}$ that consists of each user's trajectory stream $T_i^o=\{l_t|t=a_i,a_i+1, \cdots\}$, where $l_t=(x_t, y_t)$ denotes the two-dimensional coordinates of the $i$th user at timestamp $t$ and $a_i$ is the entering timestamp. The goal is to find a release algorithm $\Psi$, which takes inputs from $\mathcal{T}_{orig}$ and outputs a dynamic synthetic database $\mathcal{T}_{syn}$ at each $t$. $\Psi$ satisfies $w$-event $\epsilon$-LDP.
\end{definition}

\section{Our Proposal} 
\label{sec:method}

\begin{figure}[!t]
    \centering
    \includegraphics[width=0.48\textwidth]{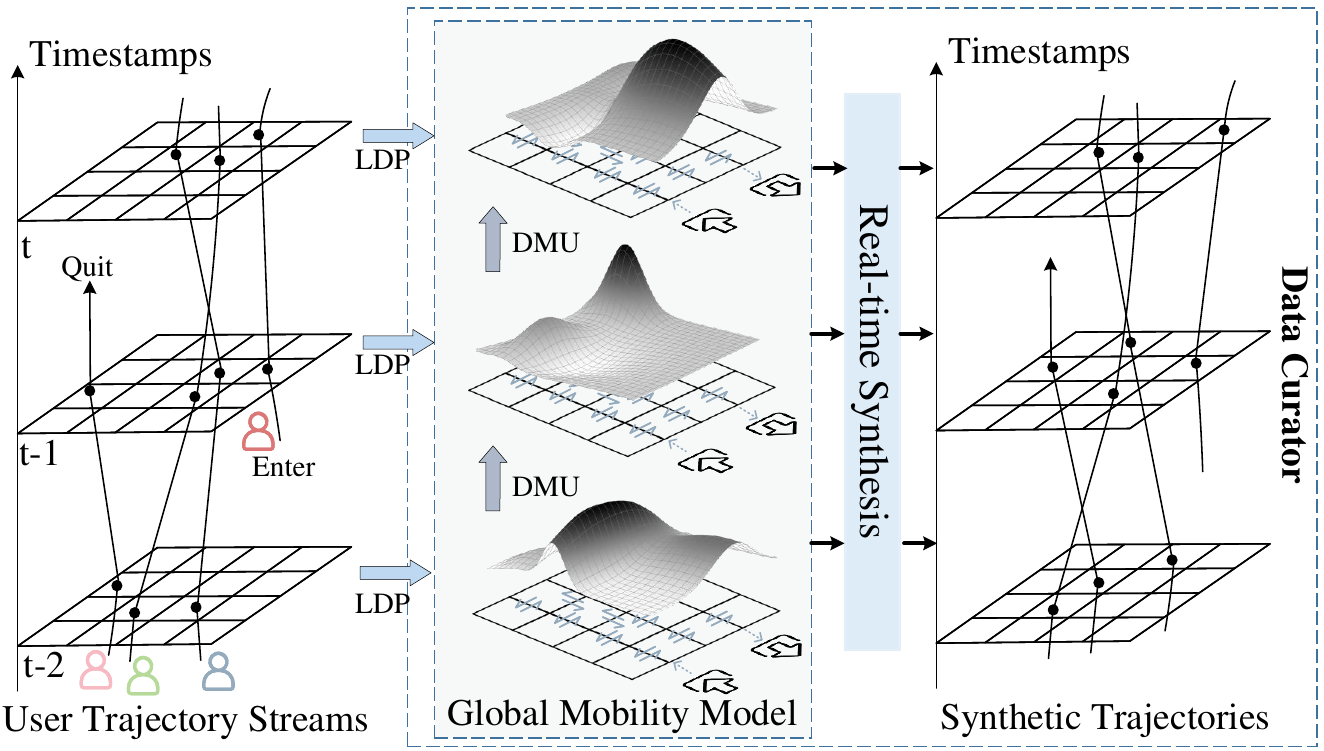}
    \caption{\mymethod architecture.}
    \label{fig:framework}
    \vspace{-4mm}
\end{figure}
In this section, we proceed to provide a method overview and the corresponding detailed techniques.

\subsection{Solution Overview}

\label{sec:overview}
We begin by introducing the overall architecture of \mymethod, as depicted in \autoref{fig:framework}. Users continuously share locations at discrete timestamps, forming trajectory streams. The curator collects them, maintaining an evolving synthetic dataset for publication. \mymethod mainly consists of four components: global mobility model, dynamic mobility update mechanism, real-time synthesis, and adaptive allocation strategy. 
The main procedure includes the following steps:

\begin{itemize}[leftmargin=*]
    \item \textbf{Step 1: Construction of Global Mobility Model (\autoref{sec:mobility_model}).} At each timestamp, each user first converts their spatial-temporal information into \textit{transition states} at the user side, which reflects her mobility status. The transitions are then collected by the curator through LDP protocol to protect sensitive information. Subsequently, they are aggregated at the curator side to construct the global mobility model, which extracts the mobility patterns of all reported users.
    \item \textbf{Step 2: Dynamic Mobility Update (\autoref{sec:dmu}).} Since the mobility patterns vary over time, we utilize a dynamic mobility update mechanism to update the global mobility model. To improve accuracy, the curator identifies the most informative transitions (\ie \textit{significant transitions}). 
    Afterwards, the distribution of significant transitions is updated using the reported perturbed statistics.
    \item \textbf{Step 3: Real-time Synthesis (\autoref{sec:synthesis}).} The final part is the real-time synthesis framework, which constructs a Markov chain-based model using the updated global mobility model and performs a generative process to update the synthetic database.  
\end{itemize}

\mypara{Adaptive Allocation (\autoref{sec:allocation})}
Besides, in streaming analysis, it is essential to appropriately distribute the privacy budget or users on timestamps, in order to satisfy the $w$-event LDP and preserve utility. To achieve this, \mymethod employs adaptive allocation strategies based on the dynamics of trajectory streams. We also track the status of users to maintain a dynamic active user set, which enables \mymethod to perform allocation strategies in more realistic scenarios.

\begin{figure}[t]
    \centering
    \includegraphics[width=0.48\textwidth]{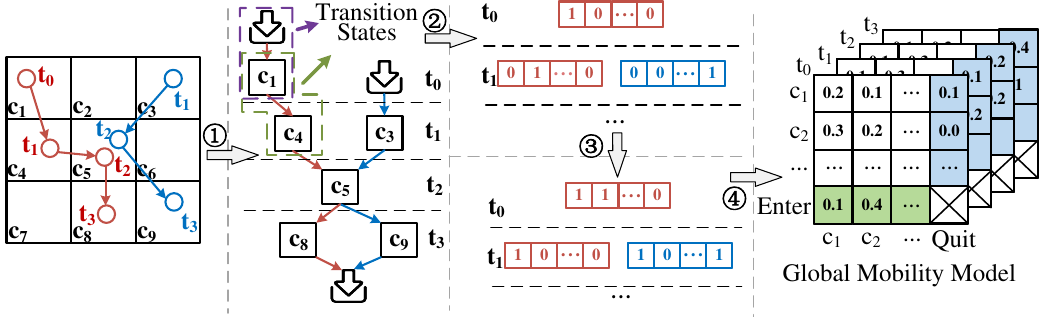}
    \caption{Illustration of mobility modeling. The process includes four steps: \ding{172} original streams are transformed into sequences of transition states; \ding{173} the transition states are encoded into binary vectors; \ding{174} LDP perturbation; \ding{175} curator side aggregation and model construction.}
    \label{fig:mobility_model}
    \vspace{-5mm}
\end{figure}

\subsection{Global Mobility Model}\label{sec:mobility_model}

To make the synthetic dataset retain a high resemblance to real trajectories, it is imperative to capture the mobility patterns inherent to the original trajectories. Traditional stream release methods, which collect isolated locations at individual timestamps, allow the curator merely to infer a static spatial distribution of users, and thus, fail to capture dynamic mobility patterns. To address it, we leverage the spatial-temporal context in data streams to extract the correlations in consecutive timestamps. In the streaming setting, our access is restricted only to data preceding the current timestamp. This suggests a preference for prior dependencies, where we assume the current location is determined by recent preceding locations. 

\mypara{Geospatial Discretization}
We initiate by discretizing the geospatial space. The raw spatial data is in a continuous two-dimensional domain, which makes pattern modeling complex. A widely accepted technique is to uniformly partition the entire space into $K\times K$ grid cells~\cite{icde13grid,vldb23_ldptrace}. Consequently, for each trajectory stream $T=\{(x_0,y_0), (x_1,y_1),\cdots\}$, it is transformed into a sequence of discrete grid cells: $T=\{c_0,c_1,\cdots\}$. We use $T[i]$ to denote the $i$th cell visited by $T$. 

\mypara{User Side Mobility Modeling}
At user side, the prior dependency at the current timestamp can be modeled using the prefix $\{T[t-r],\cdots, T[t]\}$, where $r$ is the dependency length. While incorporating longer dependencies might offer richer information, the state domain escalates exponentially in relation to $r$. Consequently, this would significantly increase the computational cost and elevate communication overhead. To ensure real-time analysis, it is appropriate to leverage shorter dependencies. Besides, as new locations in a stream appear one at a time, it is natural to use movement between two consecutive timestamps (\ie $r=1$) to represent the mobility status. Based on this, we define the \textit{movement transition state} $m_{ij}$ as the transition from cell $c_i$ to cell $c_j$.

Apart from standard movement transitions, our model integrates two special transitional events to better emulate the real-world dynamics, the \textit{entering} and \textit{quitting transition}. In realistic scenarios, users may not report their locations at every timestamp. For example, if a user enters regions with compromised signal reception (tunnels, mountainous areas, \etc) or opts to deactivate GPS service, their location becomes inaccessible to the curator, which causes a quitting event. Conversely, when a previously inactive user resumes location sharing or a new user arrives, an entering event emerges and initiates a new stream. These special transitions often reveal important spatial semantics of trajectories (\eg home/work places) and are very helpful in modeling the intrinsic spatial-temporal features. Besides, the distribution of entering/quitting events also plays an important role in our synthesis process, where we adjust the active synthetic trajectories dynamically to mimic the behaviors of real-world mobile travelers. Based on this, we define the entering and quitting transitions as follows:
\begin{definition}[Entering/Quitting Transitions]
    An entering transition $e_{i}$ represents the beginning of a new stream at cell $c_i$; a quitting transition $q_{j}$ indicates the cessation of a user's reporting activity, with the final reported location being $c_j$.
\end{definition}

We use $s_{ij}$ to represent general transition states that include the above three transitions (\ie $\mathcal{S}=\{s_{ij}\}=\{m_{ij}\}\bigcup \{e_i\}\bigcup \{q_j\}$). The discretized cell sequence of each user can equivalently be perceived as a sequence of transition states. At each timestamp, each user is in exactly one state encapsulating her mobility status. 
The process of user-side mobility modeling is depicted in \ding{172} of \autoref{fig:mobility_model}.

\mypara{Curator Side Mobility Modeling}
At curator side, we aim to maintain a global mobility model. Due to privacy concerns, the curator does not have direct access to the authentic data. To circumvent this, we employ LDP protocol (\eg OUE) to retrieve information from users privately. Specifically, each user's transition state $s_{u}$ is encoded into an $|\mathcal{S}|$-bit binary vector and then perturbed at user side (\ding{173} and \ding{174} in \autoref{fig:mobility_model}).
Subsequently, the collected statistics are aggregated to establish the global probability model predicated on a first-order Markov chain. In this model, the upcoming location's probability is solely determined by its immediate predecessor:
\begin{equation}
    Pr(T[i+1]=c|T[1]\cdots T[i])\\=Pr(T[i+1]=c|T[i])
\end{equation}
The comprehensive global mobility model consists of the movement distribution $\mathcal{M}$, entering distribution $\mathcal{E}$, and quitting distribution $\mathcal{Q}$. They can be found by aggregating the frequencies of all reported transition states. Specifically, let $f_{ij}$ denote the frequency of $m_{ij}$, then $\mathcal{M}$ can be calculated as $Pr(m_{ij})=\frac{f_{ij}}{\sum_{c_x\in\mathcal{C}}f_{ix}}$, where $\mathcal{C}$ is the domain of all grid cells. Note that the size of state space in $\mathcal{M}$ is $|\mathcal{C}|^2$, which may become too large when the discretization granularity $K$ increases. However, many transitions within this vast space will have a frequency of zero, implying they don't manifest in real-world scenarios.  Therefore, we only consider the transitions that satisfy reachability constraints. As an illustration, considering a 10-minute time granularity, it's unrealistic for a user in Beijing to traverse from the west 5th ring to the east 5th ring between two consecutive timestamps. In our uniformly split cells, we set the reachability constraints as transitions between adjacent cells, \ie $Pr(m_{ij})=\frac{f_{ij}}{\sum_{c_x\in\mathcal{N}_{c_i}}f_{ix}}$, where $\mathcal{N}_{c_i}$ is the set of neighboring cells of $c_i$, including itself. Following these constraints, our model only incorporates feasible transitions, thereby enhancing the realism and utility. The state space can be reduced to $\mathcal{O}(9|\mathcal{C}|)$. Similarly, $\mathcal{E}$ and $\mathcal{Q}$ can also be calculated by $f_{Ei}$ and $f_{jQ}$, which are the frequencies of $e_i$ and $q_j$, respectively. The final distribution is calculated as follows:
\begin{equation}\label{equ:mobility_distribution}
\begin{split}
    &Pr(m_{ij})=\frac{f_{ij}}{\sum_{c_x\in\mathcal{N}_{c_i}}f_{ix}+f_{iQ}}\\
    &Pr(e_{i})=\frac{f_{Ei}}{\sum_{c_x\in\mathcal{C}}f_{Ex}}, \quad Pr(q_{j})=\frac{f_{jQ}}{\sum_{c_x\in\mathcal{C}}f_{xQ}}
\end{split}    
\end{equation}
It's worth emphasizing that to make our synthesized trajectories authentically reflect real-world user dynamics, we consider the potential termination of a stream when synthesizing a trajectory. Therefore, we modify the original $Pr(m_{ij})$ by incorporating an additional term $f_{iQ}$ into the denominator, considering the frequency with which users quit at cell $c_i$.

Unlike existing solutions~\cite{sigmod22_ldpids}, which solely rely on static count statistics of users and overlook the correlation between consecutive timestamps, our global mobility model harnesses the spatial-temporal context effectively to capture the mobility patterns. Moreover, the incorporation of entering/quitting transitions also enhances the authenticity and utility of \mymethod.

\subsection{Dynamic Mobility Update (\dmu) Mechanism}\label{sec:dmu}
Using the frequency of transitions, \mymethod can effectively model the global mobility distribution. To dynamically update the mobility model at each timestamp, a straightforward approach is to directly substitute the extant statistics with the freshly gathered frequencies from users. 
This approach is effective when the available privacy budget is ample, leading to accurate frequency estimations. On the other hand, since many real-world streams demonstrate relatively consistent distributions between successive timestamps, it is plausible to assume that the extant mobility model still aligns closely with the genuine trajectories. Therefore, approximating current statistics with results from the most recent timestamp is also a feasible choice. Importantly, in realistic scenarios, the changing trends among different transition states may have great differences. For instance, during morning rush hours, main roads connecting residential areas to business districts might witness consistent traffic flows while transitions between other regions might experience considerable fluctuations. Consequently, at each timestamp, statistics of certain transitions can be approximated, whereas others require more precise updates. Therefore, a better strategy is to selectively update part of the mobility model and leave the remaining unchanged.

To achieve this, we propose to select the most informative parts of the transitions for update. Our primary objective is to pinpoint transitions that undergo substantial changes (termed as \textit{significant transitions}), making them difficult to approximate using the current mobility model. To guide the selection, we focus on the total introduced error caused by perturbation and approximation.
If a transition $s_{ij}$ is selected, \mymethod will use the perturbed statistics to update the mobility model. The introduced error can be calculated using the variance of OUE protocol: $Err_{upd}=\frac{4e^{\epsilon_t}}{n_t(e^{\epsilon_t}-1)^2}$, where $\epsilon_t$ is the privacy budget for perturbation and $n_t$ is the number of report users. For 
other transitions, their distribution in the mobility model will remain unchanged. Consequently, there will be distribution bias between the extant mobility model and the real one, which can be measured as $Err_{app}=|f_{ij}-\tilde{f}_{ij}|^2$, where $\tilde{f}_{ij}$ is the frequency collected from current global transition distribution. However, a challenge arises due to the unavailability of the real frequency $f_{ij}$ under LDP setting. To overcome this, we use the perturbed statistics $\hat{f}_{ij}$ to estimate the real frequency.

The selection is formulated as an optimization problem. For each transition state $s_{ij}$, it is associated with an indicator variable $x_{ij}$ that equals 1 if $s_{ij}$ is selected, and 0 otherwise. We aim to minimize the total introduced error of all transitions:
{\setlength{\abovedisplayskip}{3pt}
\setlength{\belowdisplayskip}{3pt}
\begin{equation}\label{equ:total_err}
        Err=\sum_{s_{ij}} x_{ij}\frac{4e^{\epsilon_t}}{n_t(e^{\epsilon_t}-1)^2}+ \sum_{s_{ij}}(1-x_{ij})|\tilde{f}_{ij}-\hat{f}_{ij}|^2
\end{equation}}When $\epsilon$ and $n_t$ are large, the variance of OUE will decrease, which means using the 
perturbed data can obtain more accurate results. 
On the other hand, the second term of $Err$ is data-dependent, which reflects how much the current global mobility model deviates from the real data distribution. If there's a notable change in the statistics of a transition state 
in consecutive timestamps, relying solely on the current approximation will result in substantial bias. When the potential bias is beyond the perturbation noise, \mymethod is likely to label the corresponding transition as the significant transition.

After obtaining the significant transitions $\mathcal{S}^*=\{s_{ij}|x_{ij}=1\}$, \mymethod use \autoref{equ:mobility_distribution} to update their distribution and the remaining transitions are unchanged.

The \dmu mechanism empowers \mymethod to achieve real-time updating and synthesis, a capability beyond the reach of other trajectory-aware solutions. Compared to existing real-time publish solution~\cite{sigmod22_ldpids}, we selectively update the most informative parts of the model, considering the varying trends between transitions. This approach yields a substantial reduction in the total introduced error, thereby enhancing utility.

\subsection{Real-time Trajectory Synthesis}\label{sec:synthesis}
\mymethod builds a probabilistic model for private synopsis according to the global mobility model. The algorithm consists of two steps: new point generation and size adjustment. 

\mypara{New Point Generation} 
For every extant trajectory stream within the current synthetic dataset, a new location cell is appended according to the global mobility model. Specifically, we leverage the Markov chain probability to ensure that the movement of synthetic trajectories aligns with the authentic distribution: $ Pr(c_{next}=c_j|c_{current}=c_i)=Pr(m_{ij})$.
To improve the authenticity, we also consider the potential termination of the current synthetic trajectory: $Pr(quit|c_{current}=c_i)=\frac{f_{iQ}}{\sum_{c_x\in\mathcal{N}_{c_i}}f_{ix}+f_{iQ}}$.
While the first-order Markov model predominantly concentrates on single-step transitions, direct application of this probability might inadvertently lead to premature stream termination. To further augment the model's authenticity and informativeness, we incorporate the current stream's length $\ell$ and reweight the quitting probability:
\begin{equation}\label{equ:quit}
    Pr(quit|c_{current}=c_i)=\frac{\ell}{\lambda}\cdot \frac{f_{iQ}}{\sum_{c_x\in\mathcal{N}_{c_i}}f_{ix}+f_{iQ}}
\end{equation}
where $\lambda$ is a factor controlling the effect of stream length. 

\mypara{Size Adjustment}  
In realistic scenarios, the entering and quitting events lead to fluctuations in the number of active users. In many data analysis tasks, it is crucial to query an accurate number of users or spatial points, such as traffic congestion control and emergency response. Therefore,  it's important to ensure that the size of $\mathcal{T}_{syn}$ mirrors the size of $\mathcal{T}_{orig}$ at each timestamp. To maintain congruity in the spatial distribution of entering/quitting users, we utilize the entering and quitting distributions ($\mathcal{E}$ and $\mathcal{Q}$). When the number of real users is greater than that in $\mathcal{T}_{syn}$, \mymethod appends new trajectories by sampling the start cell from $\mathcal{E}$:
    $Pr(c_{start}=c_i)=Pr(e_{i})$.
This strategy is also invoked during the initialization of $\mathcal{T}_{syn}$. Similarly, when the size of $\mathcal{T}_{syn}$ surpasses $\mathcal{T}_{orig}$, \mymethod will terminate a subset of existing streams according to their most recent locations: $Pr(quit|c_{last}=c_j)=Pr(q_j)$.
\subsection{Adaptive Allocation Strategy}
\label{sec:allocation}

$w$-event LDP requires that the cumulative budgets within any sliding window of size $w$ must not exceed $\epsilon$. This can be achieved through either budget- or population-division strategies. While previous work~\cite{sigmod22_ldpids} introduces several adaptive allocation approaches, 
they either necessitate a consistent user number or require the curator to pre-group fixed users into exclusive subsets. This makes them inapplicable in a more realistic setting, where active users vary over time. To address this, we propose a portion-based mechanism that can be implemented in both budget and population division strategies.

\mypara{Budget-division Strategy}
Based on \autoref{theo:sequential_composition}, we can utilize the privacy budget by appropriately distributing $\epsilon$ on individual timestamps.  Specifically, the holistic processing mechanism $\Psi$ can be seen as a sequential composition of sub-mechanisms $\Psi_1$, $\Psi_2$,$\cdots$. For every timestamp $i$, $\Psi_i$ collects users' data and updates the mobility model with a budget of $\epsilon_i$. In our portion-based approach, at each timestamp, the curator first calculates the remaining budget at the current window: $\epsilon_{rm}=\epsilon-\sum_{i=t-w+1}^{t-1} \epsilon_i$.  Afterwards, it allocates a portion $p$ of $\epsilon_{rm}$ for perturbation to ensure that the total consumed budget in the sliding window remains within the bound of $\epsilon$.


\mypara{Population-division Strategy}
Based on \autoref{eq:var_oue}, it is evident that 
the perturbation variance is less sensitive to $n$ than $\epsilon$.
Therefore, recent studies~\cite{sigmod22_ldpids,tdsc19_tianhao} are exploring population-division methods where users rather than the privacy budget are partitioned. For each report, a group of users is selected and leverages the entire $\epsilon$ for perturbation.
The challenge lies in achieving population allocation on an evolving user set.
To address it, we track the status of users and maintain a dynamic active user set, which will be detailed in \autoref{sec:algorithm}. For each report,  we first determine a portion $p$ and the final allocated population size becomes $p\cdot |U_a|$, where $U_a$ is the active user set. Once $p$ is decided, the curator randomly selects $p$ portion of the active users and collects their data via LDP protocol.

\mypara{Determination of $p$}
Intuitively, the allocated portion is relative to the changing trends of the data. To model the dynamics of data streams, we define the \textit{deviation} as:
\begin{equation}
    Dev_t=\sum_{s_{ij}\in\mathcal{S}}(f_{ij}^{t-1}-\frac{1}{\kappa}\sum\nolimits_{k=t-\kappa-1}^{t-1}f_{ij}^{k})
\end{equation}
where $f_{ij}^t$ is the frequency of transition state $s_{ij}$ at timestamp $t$. $Dev$ mirrors the magnitude by which the most recent statistics differ from prior ones. 
An increase in $Dev$ suggests a stream being less uniform, thereby potentially increasing approximation bias. To ensure the update accuracy, a larger $p$ is advisable. We employ the logarithm function to represent their positive correlation since it grows slowly when $Dev$ is large, aiding the curator in avoiding excessive use of privacy budgets/users when sudden large changes occur in the stream. On the other hand, rapid data changes may suggest a rise in the total number of significant transitions $|\mathcal{S}^*_t|$ to be updated within the current window. 
Thus, we monitor the changing speed by considering the ratio of $|\mathcal{S}^*_t|$ to $|\mathcal{S}|$
To prevent premature exhaustion of budget or users, a smaller $p$ is allocated when the ratio increases.
Besides, under the same conditions, a larger window size $w$ suggests more timestamps to be protected, thus the allocated $p$ on the current timestamp should be reduced. The final allocation portion $p$ can be calculated as follows:

\begin{equation}
    p_t=\min\{\frac{\alpha}{w}(1-\frac{1}{\kappa}\sum_{i=t-\kappa-1}^{t-1}\frac{|\mathcal{S}^*_i|}{|\mathcal{S}|})\ln (Dev_t+1), p_{max}\}
    \label{equ:allocation_p}
\end{equation}
where $\alpha$ and $\kappa$ are hyperparameters to control the scale of $p$ and the number of recent timestamps taken into consideration; $p_{max}$ represents the maximum portion constraint and is set to 0.6 in our experiments. This constraint can prevent excessive usage of the budget/users within a single timestamp. 

Besides the data-dependent approaches, we also implement two straightforward methods: Uniform and Sample. For Uniform in budget-division strategy, we evenly distribute the entire budget on each timestamp (\ie $\epsilon_i=\epsilon/w$). For the population-division strategy, we assign $p=1/w$ for each timestamp. In the Sample approach, the entire budget is dedicated to the first timestamp of each window. This means that all active users report every $w$ timestamps with a budget of $\epsilon$. For other timestamps, $p$ remains zero and the global mobility model is not updated.
There is also an alternative population-division strategy where users randomly select a timestamp in the current window to report after their entrance. We have tested its performance and found that our adaptive methods generally outperform it, especially in skewed and complex datasets. We also notice that in certain cases the random strategy exhibits superiority due to its less user wastage.

\begin{algorithm}[t]
\small
\caption{\mymethod with Population-Division}
\label{alg:population}
\LinesNumbered
\linespread{0.8}\selectfont
\KwIn{Raw stream dataset $\mathcal{T}_{orig}$, budget $\epsilon$, window size $w$}
\KwOut{Synthetic trajectory stream database $\mathcal{T}_\text{syn}$}
    Set the status of each new user as $active$\; \label{alg:1}
    Randomly sample $1/w$ of the users, denoted as $U_1$\; \label{alg:2}
    \For{each user $u\in{U_1}$}{ \label{alg:3}
        $V_u \gets$ OUE($s_u$, $\epsilon$);
        $u.status=inactive$\;
    }
    Initialize the global mobility model and $\mathcal{T}_{syn}$\;\label{alg:5}
    \For{each timestamp $t\ge 2$}{
        Set the status of each newly arrived user as $active$\;\label{alg:7}
        Set the status of each quitted user as $quitted$\;\label{alg:8}
        Recycle users on timestamp $t-w$\;\label{alg:9}
        
        Calculate $p_t$ using \autoref{equ:allocation_p}\;\label{alg:10}

        $U_A=\{u|u.status=active\}$\; 
        Randomly sample $U_t$ from $U_A$ with the size of $p_t\cdot |U_A|$\; \label{alg:12}
        \For{each user $u\in U_t$}{\label{alg:13}
            $V_u \gets$ OUE($s_u$, $\epsilon$);
            $u.status=inactive$\;\label{alg:14}
        }
        Select $\mathcal{S}^*$ by minimizing \autoref{equ:total_err} \;\label{alg:15}
        Update global mobility model using $\mathcal{S}^*$\;\label{alg:16}
        Generate a new point for each trajectory $T\in \mathcal{T}_{syn}$\;\label{alg:17}
        Adjust the size of $\mathcal{T}_{syn}$ based on number of active users\;\label{alg:18}
        
    }
\end{algorithm}

\subsection{Putting Things Together: \mymethod}
\label{sec:algorithm}

We proceed to describe the overall workflow of \mymethod. 
Due to space limits, we take the population-division strategy as an example, as illustrated in \autoref{alg:population}.  
At the first timestamp, we allocate $1/w$ of users to initialize the mobility model and $\mathcal{T}_{syn}$ (\autoref{alg:1}-\autoref{alg:5}). For the incoming timestamps, the curator first registers the new-come users (\autoref{alg:7}) and removes the users who cease location sharing (\autoref{alg:8}).  After determining the allocation portion $p$, we randomly sample $p$ portion of the active users for reporting (\autoref{alg:10}-\autoref{alg:12}). These chosen users subsequently report their data with LDP protocol, after which they are designated as \textit{inactive} (\autoref{alg:13}-\autoref{alg:14}). The curator then performs \dmu to update the global mobility model (\autoref{alg:15}-\autoref{alg:16}) and finally employs the real-time synthesis to generate current $\mathcal{T}_{syn}$ (\autoref{alg:17}-\autoref{alg:18}).
Since $w$-event LDP protects the privacy of any window of size $w$, the reported users that lie outside the current window should be recycled. Hence, before the collection, the curator reviews the reported users at timestamp $t-w$. Users whose status is \textit{inactive} rather than \textit{quitted} will be reset to \textit{active} and be ready for the next report (\autoref{alg:9}).

\section{Theoretical Analysis}\label{sec:analysis}
In this section, we provide privacy and complexity analysis.

\subsection{Privacy Analysis}\label{sec:privacy_analysis}
 We first prove the privacy guarantee of \mymethod.
\begin{theorem}\label{theo:LDP}
    \mymethod satisfies $w$-event $\epsilon$-LDP for each user.
\end{theorem}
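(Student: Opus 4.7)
The plan is to decompose \mymethod into its atomic privacy-consuming operations and verify that within any sliding window of $w$ timestamps the cumulative LDP budget remains at most $\epsilon$, so that sequential composition (\autoref{theo:sequential_composition}) and post-processing (\autoref{theo:post_processing}) together yield the claim.

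First, I would observe that the only step in \autoref{alg:population} that actually consumes privacy is the OUE perturbation on a user's transition state $s_u$. The aggregation that builds $\mathcal{M}$, $\mathcal{E}$, and $\mathcal{Q}$ via \autoref{equ:mobility_distribution}, the \dmu optimization of \autoref{equ:total_err} (which uses only the already-perturbed frequencies $\hat{f}_{ij}$ and the current model's $\tilde{f}_{ij}$), the Markov-based new-point generation, the size adjustment against the active-user count, and the computation of the portion $p_t$ in \autoref{equ:allocation_p} (which depends solely on previously released statistics and the sizes $|\mathcal{S}^*_i|$) all take perturbed outputs or public quantities as input. By \autoref{theo:post_processing} none of these incur additional privacy loss, so it suffices to bound the cumulative budget of OUE releases contributed by any single user inside any window of $w$ consecutive timestamps.

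Second, I would handle each allocation mode separately. For the budget-division strategy, the curator explicitly enforces $\epsilon_t \leq \epsilon_{rm} = \epsilon - \sum_{i=t-w+1}^{t-1}\epsilon_i$ at every $t$, hence $\sum_{i=t-w+1}^{t}\epsilon_i \leq \epsilon$ for all $t$; applying \autoref{theo:sequential_composition} to the $w$ successive OUE mechanisms on a user's data within the window gives $w$-event $\epsilon$-LDP directly. For the population-division strategy each OUE release uses the full budget $\epsilon$, so I must argue that any single user contributes to at most one OUE release inside any window of length $w$. This follows from the bookkeeping in \autoref{alg:population}: once a user is sampled at \autoref{alg:12}--\autoref{alg:14} her status is flipped to \emph{inactive} and she is therefore excluded from the sampling pool $U_A$ at every subsequent timestamp until the recycle step at \autoref{alg:9} resets her to \emph{active}, which cannot occur earlier than $w$ steps later. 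Hence within any $w$-consecutive window a user appears in $U_t$ at most once, and her total budget in the window is at most $\epsilon$ by \autoref{theo:sequential_composition}; all remaining quantities depend on her data only through that single release.

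The main subtlety, and what I expect to be the trickiest step, is aligning these per-user window bounds with the formal $w$-\emph{neighboring} definition, which allows two differing timestamps $i_1 \leq i_2$ to lie anywhere within a span of at most $w$ rather than in a contiguous block. I would verify that our sliding-window accounting still dominates the privacy cost in this sparsely-differing case: any such pair $(i_1,i_2)$ is contained in some window of length $w$ on which the above analysis already establishes $\epsilon$-LDP, so the likelihood ratio between outputs on two $w$-neighboring prefixes telescopes to at most $e^{\epsilon}$, completing the proof.
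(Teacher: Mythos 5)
Your proposal is essentially the argument one would expect here: the paper itself gives no in-text proof of \autoref{theo:LDP} (it defers entirely to the technical report~\cite{fullversion}), so there is nothing more detailed to compare against, and your decomposition --- OUE perturbation as the sole privacy-consuming step, \autoref{theo:post_processing} for all curator-side computation (\dmu, synthesis, size adjustment, and the allocation portion $p_t$, which the curator can only compute from already-perturbed statistics), explicit budget bookkeeping for the budget-division mode, the at-most-one-report-per-$w$-window property enforced by the \emph{inactive}/recycle bookkeeping in \autoref{alg:population} for the population-division mode, and the final alignment with the $w$-neighboring definition via \autoref{theo:sequential_composition} --- is the standard and correct route to the stated guarantee.
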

\begin{proof}
    Please find detailed proof in our technical report~\cite{fullversion}.


\end{proof}

\subsection{Complexity Analysis}
We also discuss the computational and communication cost of \mymethod at each timestamp. 

\mypara{User Side Computation}
Since users retain control of their own data, the perturbation is performed locally on the user side. For each user, each bit of the encoded vector is independently perturbed. Therefore, the cost of reporting operation per user is $\mathcal{O}(|\mathcal{S}|)$, where $\mathcal{S}$ is the domain of transition states. Since we only consider the neighboring transitions, the complexity is $\mathcal{O}(9|\mathcal{C}|)$, where $\mathcal{C}$ is the domain of all cells.

\mypara{Curator Side Computation}
The curator first determines the allocated portion $p$ based on the deviation and number of significant transitions, which exhibits a complexity of $\mathcal{O}(|\mathcal{S}|)$ $(i.e., \mathcal{O}(9|\mathcal{C}|))$. After collecting the transition states, the unbiased adjustment of OUE protocol exhibits a complexity of $\mathcal{O}(n\cdot|\mathcal{C}|)$, where $n$ is the number of report users at the current timestamp. To perform the \dmu mechanism, the optimization of \autoref{equ:total_err} and the update process can both be done in $\mathcal{O}(9|\mathcal{C}|)$ complexity. In the synthesis phase, new point generation incurs a cost of $\mathcal{O}(n_l)$ and the size adjustment takes a complexity of $\mathcal{O}(|n_l-n|)$, where $n_l$ is the number of active users at the last timestamp. Consequently, the overall overhead for each timestamp is $\mathcal{O}(n|\mathcal{\mathcal{C}}|+n_l)$. For population division-based methods, the sampling of report users takes an additional complexity of $\mathcal{O}(p\cdot|U_a|)$.

\mypara{Communication Overhead}
Now we analyze the communication cost of \mymethod. At each timestamp, the communication between the curator and distributed users is mainly the report process. For each report, every user transmits the perturbed encoding vector to the curator, thus the communication overhead for this process is the length of the encoding vector. For budget-division based methods, all users will participate in the reporting process, thus the total communication bits per timestamp is $\mathcal{O}(n\cdot9|\mathcal{C}|)$. For population-division based methods, since only a portion of sampled users communicates with the curator, the overhead is $\mathcal{O}(p\cdot n\cdot 9|\mathcal{C}|)$. Moreover, as the curator needs to track the status of each user and correspondingly inform users whether to report their data, there is an extra communication overhead of $\mathcal{O}(n)$.

\section{Experiments}\label{sec:experiments}
In this section, we evaluate the performance of the proposed method compared with existing state-of-the-arts.
\subsection{Experimental Setup}
\mypara{Datasets}
Three trajectory datasets are used: T-Drive, Oldenburg and SanJoaquin. T-Drive~\cite{tdrive}
records the traces of 10,357 taxis operating in Beijing during one week. We select the denser area within the 5th ring and follow~\cite{sigmod22_ldpids} to transform the time dimension into 886 timestamps with a granularity of 10 minutes. Oldenburg and SanJoaquin are generated using Brinkhoff's network generator for moving objects  \cite{brinkhoff}. Specifically, we use the roadmap of Oldenburg city to create the Oldenburg dataset with 500 timestamps. There are 10,000 users at the beginning and 500 new users are added per timestamp. SanJoaquin is based on the map of San Joaquin County and contains 1,000 timestamps. It begins with 10,000 initial users,  and an additional 1,000 users are included per timestamp. The users in these two datasets randomly quit sharing their locations and the time interval between two consecutive timestamps approximates 15 seconds. We assume the curator periodically collects the locations from users, and align the time in three datasets with corresponding discrete collection timestamps. For trajectories including non-adjacent timestamps, we add quitting events and split them into multiple streams. The detailed dataset statistics are illustrated in \autoref{tab:dataset_description}.



\begin{table}[!t]
    \centering
    \vspace{-8mm}
    \caption{Statistics of the datasets used in our experiments.}
    \vspace{-9pt}
    \resizebox{0.48\textwidth}{!}{
    \begin{tabular}{ccccc}
    \toprule
         \textbf{Dataset} &\textbf{Size} &\textbf{\# of Points} &\textbf{Average Length} &\textbf{Timestamps} \\ \midrule
         \textbf{T-Drive} &232,640 &3,167,316 &13.61 &886 \\ 
         \textbf{Oldenburg} &260,000 &15,597,242 &59.98 &500\\ 
         \textbf{SanJoaquin} &1,010,000 &55,854,936 &55.30 &1,000 \\ \bottomrule
    \end{tabular}
    }
    \vspace{-6mm}
    \label{tab:dataset_description}
\end{table}


\mypara{Baselines}
To the best of our knowledge, \ldpids~\cite{sigmod22_ldpids} is the state-of-the-art streaming release framework that satisfies rigorous $w$-event $\epsilon$-LDP. It proposes four strategies:
\begin{itemize}[leftmargin=*]
    \item \textbf{LBD} and \textbf{LBA} are two budget division based methods. LBD distributes the budget to the perturbation timestamps (\ie sampling points) in an exponentially decreasing manner. LBA uniformly allocates the privacy budget and the budget of non-sampling points are absorbed by sampling points.
    \item \textbf{LPD} and \textbf{LPA} are two population division based methods. The allocation strategy is similar to LBD and LBA. The difference is to separate users rather than the privacy budget.
\end{itemize}
However, since \ldpids is designed to solve the histogram release problem, it is not directly applicable for publishing trajectories. Therefore, we make some modifications for a fair comparison. Specifically, we employ its two-step private mechanism to collect the transition states from users and build the global mobility model. Afterward, we leverage the same Markov probability model as ours to generate new points without considering the entering/quitting of users. 

We use \mymethodb and \mymethodu to denote our proposed budget and population division strategies.

\mypara{Experimental Settings}
In our experiments, we study the impact of parameters, as summarized in \autoref{tab:parameters}. The parameters $\epsilon$, $w$, and $\varphi$ (introduced in \autoref{sec:metrics}) only have an impact on utility performance; we analyze them in \autoref{sec:ablation}. In \autoref{sec:scalability}, we analyze the efficiency and scalability of \mymethod \wrt data cardinality and $K$.  For other parameters that have relatively little impact, we fix their values.
Specifically, the termination restriction factor $\lambda$  is set as the average trajectory length of each dataset, since it reflects the overall propensity of users to quit. For the adaptive allocation strategy, we set $\alpha=8$ and $\kappa=5$. Our experiments are conducted on a computer with Intel Xeon 2.1GHz CPU and 128 GB main memory.

\begin{table}[!t]
    \centering
    \vspace{-8mm}
    \caption{Parameter ranges. The default values are in bold.}
    \vspace{-9pt}
    \resizebox{0.42\textwidth}{!}{
    \begin{tabular}{lc}
    \toprule
         \textbf{Parameter} &\textbf{Range}  \\ \midrule
         privacy budget $\epsilon$ &0.5, \textbf{1.0}, 1.5, 2.0 \\ 
         window size $w$ &10, \textbf{20}, 30, 40, 50\\ 
         evaluation time range size $\varphi$ &5, 10, \textbf{20}, 50, 100 \\
         discretization granularity $K$ &2, \textbf{6}, 10, 14, 18 \\
         size of datasets &20\%, 40\%, 60\%, 80\%, \textbf{100\%} \\ \bottomrule
    \end{tabular}
    }
    \vspace{-6mm}
    \label{tab:parameters}
\end{table}

\begin{table*}[t]
\centering
\vspace{-8mm}
\caption{Overall utility performance with different privacy budgets. The best values are shown in bold. For Hotspot NDCG, Pattern F1 and Kendall Tau, larger values are better. For other metrics, smaller values are better.} \vspace{-3mm}
\label{tab:overall-performance}
\resizebox{0.98\textwidth}{!}{
\begin{tabular}{c|c|c c c c| c c c c| c c c c}
\toprule
\multicolumn{2}{c|}{} &
\multicolumn{4}{c|}{\textbf{T-Drive}} &
\multicolumn{4}{c|}{\textbf{Oldenburg}} &
\multicolumn{4}{c}{\textbf{SanJoaquin}}\\
\multicolumn{2}{c|}{} &
\multicolumn{1}{c}{$\epsilon=0.5$} &
\multicolumn{1}{c}{$\epsilon=1.0$} &
\multicolumn{1}{c}{$\epsilon=1.5$} &
\multicolumn{1}{c|}{$\epsilon=2.0$} &
\multicolumn{1}{c}{$\epsilon=0.5$} &
\multicolumn{1}{c}{$\epsilon=1.0$} &
\multicolumn{1}{c}{$\epsilon=1.5$} &
\multicolumn{1}{c|}{$\epsilon=2.0$} &
\multicolumn{1}{c}{$\epsilon=0.5$} &
\multicolumn{1}{c}{$\epsilon=1.0$} &
\multicolumn{1}{c}{$\epsilon=1.5$} &
\multicolumn{1}{c}{$\epsilon=2.0$}
\\ \hline
\multirow{6}*{\textbf{Density Error}}
&LBD     &0.5162 &0.6028 & 0.6090&0.6248    &0.2342&0.2262& 0.2591
&0.4859&0.6456 &0.4957 & 0.2261
&0.6039\\
&LBA     &0.6418 &0.6415 & 0.4867&0.5005    &0.2542&0.3245 & 0.2867
&0.1959    &0.5674 &0.4513 & 0.6470
&0.5660\\
&LPD     &0.3738 &0.2173 & 0.5640&0.5979    &0.5166 &0.5161 & 0.2380
&0.4121    &0.6226 &0.2910 & 0.5216
&\B{0.0735}\\
&LPA     &0.2617&0.5707 & 0.3126&0.2295    &0.6049 &0.2334 & 0.3799
&0.3606    &0.2833 &0.5189 & 0.3217
&0.3962\\
&\mymethodb  & 0.1398& 0.1354& 0.1358&0.1342     & 0.1321& 0.1260& 0.1267
&0.1242   &0.1696 &0.1636 & 0.1569
&0.1543\\
&\mymethodu  & \B{0.1365}& \B{0.1338}& \B{0.1319}&\B{0.1271} & \B{0.1259}& \B{0.1171}& \B{0.1115}&         \B{0.1033}&\B{0.1549} &\B{0.1435} &\B{0.1299} &0.1164\\ 
 \hline
\multirow{6}*{\textbf{Query Error}}
&LBD     & 1.6495
& 1.8010
&     1.8129
& 1.8308
& 0.7628&     0.7559& 0.7318
& 0.9204&0.9310   &0.8909 &0.8673 &0.9510\\
&LBA     & 1.8691
& 1.8678
&     1.5565
& 1.6070
& 0.8512
&     0.9101
& 0.9705
& 0.7862
&0.8834     &0.8753 &0.9457 &0.8832\\
&LPD     & 0.9960
& 0.7318
&     1.3166
& 1.0260
& 1.5086
&     0.7061
& 0.7591
& 0.7845
&     0.9522&0.8838 &0.9341 &0.8223\\
&LPA     & 0.9449& 1.3170&     0.8615
& 1.1708
& 0.9600
&     0.8472
& 0.8062
& 0.9306
&0.8368     &0.9104 & 0.8637&0.8618\\
&\mymethodb     & 0.5124 & 0.5098&     0.5075
& 0.4997
& 0.6084
&     0.5958
& 0.5574
& 0.6200
&0.5371     &0.5467 & 0.5280&0.4944\\
&\mymethodu  & \B{0.5055}
& \B{0.4851}&     \B{0.4764}& \B{0.4536}& \B{0.5960}
&     \B{0.5629}
& \B{0.5351}
& \B{0.5560}
&\B{0.4988}     &\B{0.5110} &\B{0.4790} &\B{0.4599}\\ 
 \hline
\multirow{6}*{\textbf{Hotspot NDCG}}
&LBD     & 0.1539
& 0.1970
&     0.2805
& 0.0478
& 0.1255&     0.0948& 0.0978& 0.1010 &0.1248 &0.2558 &0.5204 &0.3052\\
&LBA     & 0.0831
& 0.0912
&     0.1906
&0.2325 &0.1339 &0.1460     &0.3129 &0.3390 &0.1230  &0.4219 &0.1386 &0.1222\\
&LPD     & 0.1859
& 0.3703
&     0.1872
& 0.1475&0.1487 &0.0939     &0.3189 &0.1141 &0.0924  &0.3750 &0.2127 &0.6016\\
&LPA     & 0.0354& 0.2538&     0.1683
& 
0.1946&0.1285 &0.1294     &0.3414 &0.0571 &0.0277     &0.2064 &0.0522 &0.3650\\
&\mymethodb     & 0.3917
& \B{0.4824}
&     0.4595
& 0.4516&0.4381 &0.4613     &0.4964 &0.4676 &0.4751    &0.5641 & 0.6662&0.7158\\
&\mymethodu  & \B{0.4428}
& 0.4697
&     \B{0.5380}
& \B{0.5416} &\B{0.4725} &\B{0.5908} &\B{0.6821} &\B{0.6988} &\B{0.6803}     &\B{0.7913} &\B{0.8880} &\B{0.9129}\\ 
 \hline\hline
\multirow{6}*{\textbf{Transition Error}}
&LBD     &0.6094&0.6386& 0.6341&0.6551&0.5697&0.4955& 0.5003
&0.6084   &0.6736 &0.5799 & 0.4589
&0.6438\\
&LBA     &0.6575 &0.6573 & 0.5807
&
0.5848&0.5892 &0.5836 & 0.5591
&0.5274   &0.6450 &0.5685 & 0.6771
&0.6422\\
&LPD     &0.5127 &0.4509 & 0.6205
&0.6143&0.6073 &0.5336 & 0.4291
&0.4740   &0.6570 &0.4610 & 0.6095
&\B{0.2973}\\
&LPA     &0.5430&0.6108 & 0.5047
&
0.4631&0.6632 &0.4711 & 0.5289
&0.5089  &0.5710 &0.6178 & 0.5559
&0.4609\\
&\mymethodb  & 0.4171& 0.4104& 0.4091
&0.4063& 0.4906& 0.4745& 0.4669
&0.4539   &0.5021 &0.4827 & 0.4682
&0.4517\\
&\mymethodu  & \B{0.4078}& \B{0.3951}& \B{0.3798}
&\B{0.3724}& \B{0.4645}& \B{0.4223}& \B{0.3907}
&         \B{0.3581}&\B{0.4617} &\B{0.4134} &\B{0.3740} &0.3347\\ 
 \hline
\multirow{6}*{\textbf{Pattern F1}}
&LBD     & 0.2029
& 0.1760
&     0.1873
& 0.0912
& 0.2633&     0.2394& 0.1950
& 0.2418&0.1125     &0.2075 &0.2756 &0.1784\\
&LBA     & 0.0955
& 0.0973
&     0.2470
& 0.2291
& 0.1468
&     0.1515
& 0.2089
& 0.2247
&0.2430     &0.2400 &0.2496 &0.2423\\
&LPD     & 0.2487
& 0.2715
&     0.1586
& 0.1272
& 0.1804
&     0.1203
& 0.3007
& 0.3052
&     0.1719&0.2007 &0.2668 &0.4055\\
&LPA     & 0.1539& 0.289&     0.2763
& 0.2467
& 0.1207
&     0.2289
& 0.2795
& 0.2666
&0.2682     &0.2656 & 0.2603&0.3527\\
&\mymethodb     & 0.3668
& 0.3898
&     0.3876
& 0.4004
& 0.3988
&     0.4185& 0.4311
& 0.4296
&0.4067     &0.4304 & 0.4381&0.4421\\
&\mymethodu  & \B{0.3958}& \B{0.4093}
&     \B{0.4128}
& \B{0.4312}
& \B{0.4249}
&     \textbf{0.4596}& \B{0.4561}
& \B{0.4768}
&\B{0.4407}     &\B{0.4600} &\B{0.4516} &\B{0.4808}\\ 
 \hline\hline
\multirow{6}*{\textbf{Kendall Tau}}
&LBD     &0.1651&0.2413& 0.1143&-0.0032
&0.3635&0.2460& 0.2429&0.1159&0.2429 &0.2175 & 0.2714
&0.1254\\
&LBA     &0.1809&0.1764 & 0.0889
&0.1175   &0.1793 &0.2429 & 0.1254
&0.2111   &0.1794 &0.2905 & 0.2429
&0.1794\\
&LPD     &0.2476 &0.4095 & 0.0317
&0.2476   &0.1825 &0.1952 & 0.1000
&0.1603   &0.1794 &0.0048 & 0.3635
&0.3317\\
&LPA     &-0.0698&0.3683 & 0.1143
&-0.0070   &0.2746 &0.2460 & 0.4904
&0.2587   &0.0143 &0.2778 & 0.1730
&0.3381\\
&\mymethodb  & 0.6730& 0.6952& \B{0.7143}
&0.7270   & 0.7285& 0.7540& 0.7095
&\B{0.7413}&\B{0.6651} &0.6524 & 0.6460
&0.6524\\
&\mymethodu  & \B{0.7333}& \B{0.6984}& 0.7079
& \B{0.7429}& \B{0.7317}&     \B{0.7635}& \B{0.7222}&   0.7000&0.6587 &\B{0.6778} &\B{0.6730} &\B{0.6809}\\ 
 \hline
\multirow{6}*{\textbf{Trip Error}}
&LBD     &0.6779 &0.6834 & 0.6834
&0.6834   &0.4217&0.3736& 0.4052
&0.5833&0.6751 &0.4711 & 0.4486
&0.6852\\
&LBA     &0.6834 &0.6834 & 0.6657
&0.6737   &0.3905 &0.4319 & 0.4259
&0.3499   &0.6798 &0.6482 & 0.6768
&0.6799\\
&LPD     &0.5891 &0.4731 & 0.6834
&0.6432   &0.6579 &0.5664 & 0.3944
&0.5246   &0.6823 &0.4398 & 0.6657
&\B{0.3260}\\
&LPA     &0.4614&0.6676 & 0.5055
&0.4524   &0.6719 &0.3677 & 0.5916
&0.5369   &0.4395 &0.6256 & 0.4736
&0.6119\\
&\mymethodb  & 0.3390& 0.3362& 0.3290
&0.3346 & 0.2979& 0.2935& 0.2892
&   0.2853&0.3650 &0.3602 & 0.3574
&0.3623\\
&\mymethodu  &\B{0.3275} & \B{0.3227}& \B{0.3100}
&       \B{0.3055}& \B{0.2961}& \B{0.2860}& \B{0.2824}
&   \B{0.2756}&\B{0.3543}  &\B{0.3499} &\B{0.3397} &0.3353\\ 
 \hline
 \multirow{6}*{\textbf{Length Error}}
&LBD     & 0.6931& 0.6931&     0.6931& 0.6931& 0.6931&     0.6931& 0.6931& 0.6931&     0.6931& 0.6931& 0.6931&0.6931
\\
&LBA     & 0.6931& 0.6931&     0.6931& 0.6931& 0.6931&     0.6931& 0.6931& 0.6931&     0.6931& 0.6931& 0.6931&0.6931
\\
&LPD     & 0.6931& 0.6931&     0.6931& 0.6931& 0.6931&     0.6931& 0.6931& 0.6931&     0.6931& 0.6931& 0.6931&0.6931
\\
&LPA     & 0.6931& 0.6931&     0.6931& 0.6931& 0.6931&     0.6931& 0.6931& 0.6931&     0.6931& 0.6931& 0.6931&0.6931
\\
&\mymethodb     & 0.2016& 0.2013&     0.1966& 0.1883
&0.5350 &\B{0.5197} &0.5108 &0.5093   &0.4964 &0.5063 &0.4838 &0.5078\\
&\mymethodu  & \B{0.2020}& \B{0.1915}&  \B{0.1828} & \B{0.1754}   &\B{0.5168} &0.5230 & \B{0.5107}&\B{0.5092}  &\B{0.4858} &\B{0.4857} &\B{0.4855} &\B{0.4430}\\ 
 \bottomrule

\end{tabular}}
\vspace{-15pt}
\end{table*}

\subsection{Utility Metrics}\label{sec:metrics}
To evaluate the effectiveness and versatility of \mymethod, we implement various utility metrics, encompassing both real-time analysis and static historical analysis.

\mypara{Streaming Metrics}
In streaming analysis, we focus on the performance of an individual timestamp or a period within the stream. We categorize the metrics into two aspects: global and semantic levels.
The global level utility assesses the overarching spatial-temporal distribution of the dataset:
\begin{itemize}[topsep=0pt,itemsep=0pt,parsep=0pt,partopsep=0pt,leftmargin=*]
    \item \textbf{Density Error} measures the difference (Jenson-Shannon divergence JSD~\cite{ccs18_adatrace,vldb23_ldptrace}) between the density distribution of  $\mathcal{T}_{syn}$ and  $\mathcal{T}_{orig}$ in a given timestamp.
    \item \textbf{Query Error} is widely used in evaluation of synthesis-based algorithms~\cite{ccs18_adatrace,vldb23_ldptrace}. We use spatial-temporal range query to measure the utility within a time period of size $\varphi$. Specifically, a query $Q_i(\mathcal{T})$ returns the count of spatial points in dataset $\mathcal{T}$ that fall within a specific spatial region during a time range of size $\varphi$.
    The query error can be calculated as the mean relative error between $\mathcal{T}_{orig}$ and $\mathcal{T}_{syn}$.
    We report the average result of 100 random queries.
    \item \textbf{Hotspot NDCG} measures how well the synthetic dataset preserves the spatial-temporal hotspots. We use Normalized Discounted Cumulative Gain (NDCG@$n_h$) to evaluate the quality of ranking the most popular $n_h$ cells within a random time range. We set $n_h=10$ and report the average result of 100 random time ranges of size $\varphi$.
\end{itemize}

The semantic level utility measures the preservation of mobility patterns in the original trajectories. 
\begin{itemize}[topsep=0pt,itemsep=0pt,parsep=0pt,partopsep=0pt,leftmargin=*]
    \item \textbf{Transition Error} measures the distribution of spatial transition in a single timestamp. Similar to density error, we use JSD to calculate the difference between  $\mathcal{T}_{syn}$ and $\mathcal{T}_{orig}$.
    \item \textbf{Pattern F1.} Despite of the single-step transition, we also consider mobility patterns with high-order dependencies. Specifically, a pattern $P$ is defined as an ordered sequence of consecutive cells. To evaluate the pattern utility in a time period of size $\varphi$, we select top-$N$ most frequent patterns in $\mathcal{T}_{syn}$ and $\mathcal{T}_{orig}$, and calculate the F1 score as the similarity measure. We report the average result of 100 random time periods and $N$ is set to 100.
\end{itemize}

\mypara{Historical Metrics}
Though \mymethod primarily focuses on real-time analysis, it is also capable of handling tasks on released historical data. Given that both global and semantic level utility are thoroughly assessed by streaming metrics, we focus on trajectory-level metrics in historical data, which is implemented on the entire trace of users rather than individual points or slices. 
Specifically, we follow~\cite{vldb23_ldptrace,ccs18_adatrace} to use \textbf{Kendall's Tau Coefficient}, \textbf{Trip Error}, and \textbf{Length Error} as our utility metrics. Kendall-tau models the discrepancies in locations' popularity ranking. Trip error and length error use JSD to measure the difference between start/end points and travel distance distribution in $\mathcal{T}_{orig}$ and $\mathcal{T}_{syn}$.

Importantly, although our global mobility model can be used to estimate certain metrics such as density error, it is unable to derive metrics involving the entire trajectory. This underscores the significance of our synthesis-based strategy, which supports arbitrary downstream tasks without consuming any additional privacy budget and avoids designing meticulous DP mechanisms tailored to each specific task. 


\begin{table*}[!t]
    \centering
    \vspace{-8mm}
    \caption{Impact of significant transition selection and entering/quitting events. The best result is shown in bold. 
    } 
    \vspace{-3mm}
    \label{tab:ablation}
    \resizebox{0.92\textwidth}{!}{
    \begin{tabular}{c|c|c|c|c|c|c|c|c|c}
    \toprule
      \textbf{Dataset} &\textbf{Model}  &\textbf{Density Error} &\textbf{Query Error} &\textbf{Hotspot NDCG}&\textbf{Transition Error}&\textbf{Pattern F1}&\textbf{Kendall Tau}&\textbf{Trip Error}&\textbf{Length Error}\\ \midrule
      \multirow{6}*{\textbf{T-Drive}}&\nosampling&0.1557&0.5171&0.4135&0.4798&0.3542&0.6952&0.3294&0.1693\\
      &\nosamplingu&0.1525&0.5130&0.4528&0.4691&0.3951&\B{0.7492}&\B{0.3208}&\B{0.1651}\\
      &\noeq&0.1413&0.8718&0.3821&0.4131&0.3818&-0.5905&0.3778&0.6931\\
      &\noequ&0.1366&0.8461&0.3935&0.4111&0.3992&-0.6286&0.3947&0.6931\\
      &\mymethodb &0.1354&0.5098&\B{0.4824}&0.4104&0.3898&0.6952&0.3362&0.2013\\ 
      &\mymethodu &\B{0.1338}&\B{0.4851}&0.4697&\B{0.3951}&\B{0.4093}&0.6984&0.3227&0.1915\\ \midrule
      \multirow{6}*{\textbf{Oldenburg}}&\nosampling
&0.1474
&0.6020&0.4595&0.5252
&0.4043&0.7381
&0.2875
&\B{0.4874}\\
      &\nosamplingu
&0.1317
&0.5688&0.5561&0.4629
&0.4436&0.7000
&0.2963
&0.4875
\\
      &\noeq
&0.1286
&0.7429&0.4428&0.4784
&0.4077&0.5381
&0.3110
&0.6931
\\
      &\noequ
&\B{0.1139}&0.7178&0.5518&0.4228&0.4492&0.6143
&0.2997
&0.6931
\\
      &\mymethodb
&0.1260&0.5958&0.4613&0.4745&0.4185&0.7540&0.2935&0.5197\\ 
      &\mymethodu &0.1171&\B{0.5629}&\B{0.5908}&\B{0.4223}&\B{0.4596}&\B{0.7635}&\B{0.2860}&0.5230\\ \midrule
      \multirow{6}*{\textbf{SanJoaquin}}&\nosampling
&0.1773
&0.5582&0.5560&0.5246
&0.4108&0.6619
&0.3630
&0.4550
\\
      &\nosamplingu
&0.1443
&0.5235&0.7738&0.4320
&0.4503&0.6587
&0.3533
&\B{0.4546}\\
      &\noeq
&0.1663
&0.8251&0.5416&0.4879
&0.4236&-0.6079
&0.3600
&0.6931
\\
      &\noequ
&0.1450
&0.8210&0.7599&0.4161
&0.4542&-0.3540
&0.3745
&0.6931
\\
      &\mymethodb 
&0.1636&0.5467&0.5641&0.4827&0.4304&0.6524&0.3602&0.5063\\ 
      &\mymethodu &\B{0.1435}&\B{0.5110}&\B{0.7913}&\B{0.4134}&\B{0.4600}&\B{0.6778}&\B{0.3499}&0.4857\\ 
      \bottomrule

    \end{tabular}}
\vspace{-15pt}
\end{table*}

\subsection{Overall Performance}\label{sec:overall_performance}
We first compare the overall utility of \mymethod and baselines with various privacy budgets. 
Due to space limits, we present the best results among our implemented allocation strategies (Adaptive, Uniform and Sample). A more detailed comparison between them is analyzed in \autoref{sec:ablation_allocation}. Based on  \autoref{tab:overall-performance}, we have the following observations:

In general, \mymethod outperforms the competitors across three datasets. Since \ldpids is originally designed for publishing static statistics,  its efficacy diminishes markedly when tasked with capturing intricate spatial-temporal patterns inherent to trajectory publishing. Though there are cases where \ldpids demonstrates commendable performance, its utility is inconsistent, underperforming in other datasets and metrics. In contrast, \mymethod delivers robust and good performance in different metrics and datasets. It also proves the versatility of \mymethod for handling various analysis tasks, spanning from streaming to historical analysis. Moreover, the significant improvement of \mymethod on trajectory-level metrics underscores the authenticity of our synthesized trajectories. This is because we effectively emulate the behavior of real-world users, thereby enhancing authenticity.

For \mymethod, population-division strategy generally outperforms budget-division strategy. This suggests that separating users rather than the budget can notably mitigate the OUE noise.  
We also find that the population-division strategy of \ldpids doesn't consistently yield superior utility. This is attributed to its assumption of a fixed active user set. When users dynamically enter and quit at each timestamp, it fails to determine the optimal allocation size, leading to suboptimal performance. In contrast, \mymethod maintains a dynamic active user set and employs a flexible portion-based allocation strategy, effectively addressing this challenge.

\begin{figure}[t]
    \centering
    \includegraphics[width=0.48\textwidth]{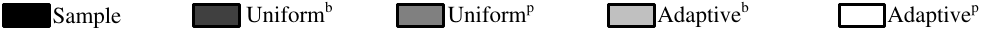}
    \includegraphics[width=0.48\textwidth]{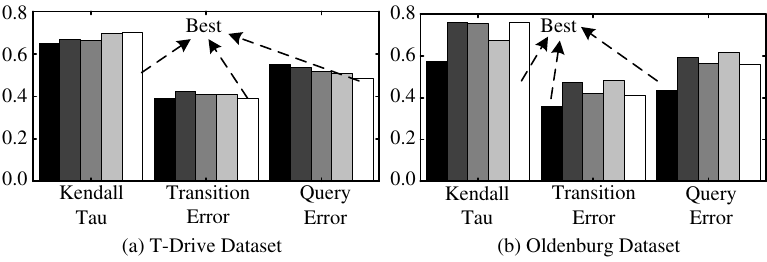}
     \vspace{-3mm}
    \caption{Impact of allocation strategy. Larger values are better for Kendall-tau and smaller values are better for Query Error and Transition Error.}
     \vspace{-6mm}
    \label{fig:allocation}
\end{figure}

We now analyze the utility \wrt different $\epsilon$s. For \mymethod, its overall performance improves with a higher $\epsilon$, due to the enhanced accuracy of OUE. 
For \ldpids, the utility fluctuates with varying $\epsilon$s. This suggests that they fail to capture the relationship between dissimilarity and perturbation noise when handling streaming dynamics of spatial-temporal patterns. 

\begin{figure*}[t]
    \centering
    \vspace{-8mm}
    \includegraphics[width=0.5\textwidth]{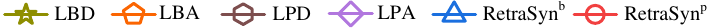}
    \includegraphics[width=0.99\textwidth]{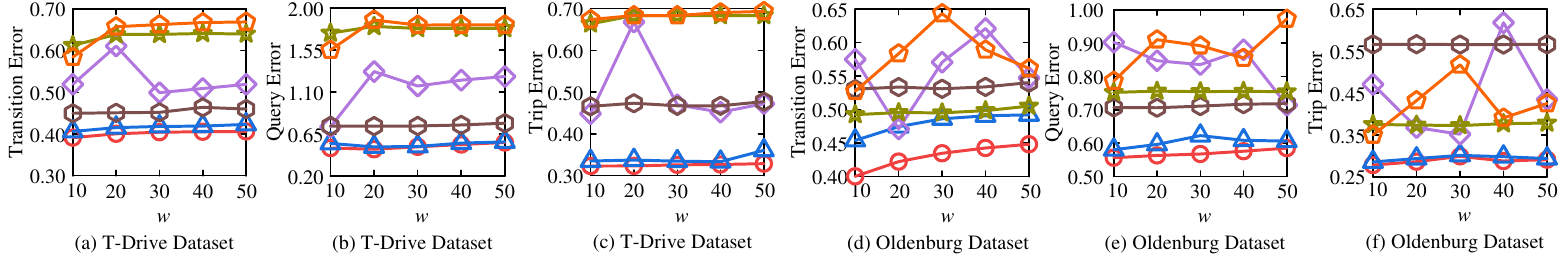} \vspace{-4mm}
    \caption{Impact of window size $w$ on T-Drive and Oldenburg dataset.} 
    \label{fig:param_w}
    \vspace{-3mm}
\end{figure*}

\begin{figure*}[t]
    \centering
    \includegraphics[width=0.5\textwidth]{figures/legends.pdf}
    \includegraphics[width=0.99\textwidth]{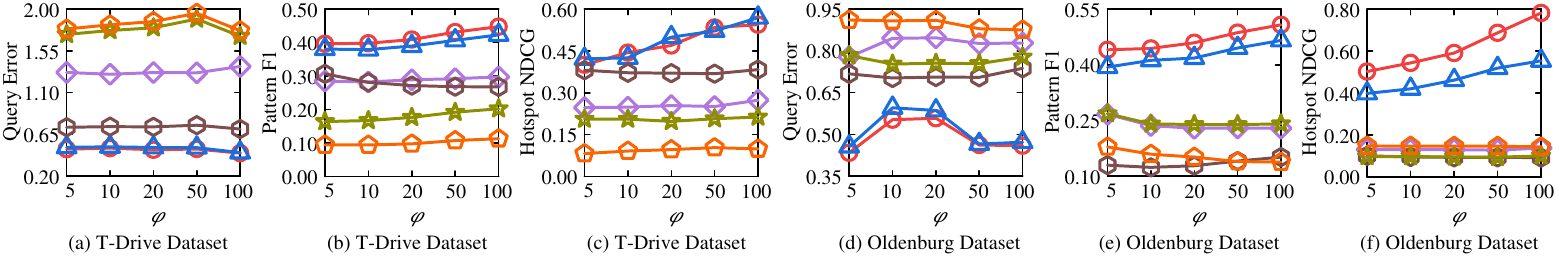} \vspace{-4mm}
    \caption{Impact of evaluation time range size $\varphi$ on T-Drive and Oldenburg dataset.}
    \label{fig:param_phi}
    \vspace{-20pt}
\end{figure*}

\subsection{Ablation Study and Parameter Variation}\label{sec:ablation}

\mypara{Impact of Significant Transition Selection}
We first verify the effectiveness of our \dmu mechanism by comparing \mymethod with variants named \nosampling and \nosamplingu. Unlike \mymethod, these variants update the entire global mobility model at each timestamp without choosing significant transitions. \autoref{tab:ablation} highlights their subpar performance in both global and semantic metrics. This indicates our \dmu can effectively prevent the model from accumulating too much perturbation noise, which reduces the overall error. 
It is also noticed that the performance of these variants sometimes matches or even surpasses \mymethod in trajectory-level metrics. One explanation is the infrequent updates of entering/quitting transitions in  \mymethod due to their typically lower frequencies, which makes them less likely to be chosen as significant transitions. This results in a delay in capturing trajectory-level information. Though our \dmu might cause a slight sacrifice in trajectory-level utility, it still ensures a more accurate global mobility model and yields better overall performance.

\mypara{Impact of Entering/quitting Events}
We also analyze the necessity of incorporating the entering and quitting events in real-world scenarios. To make a comparison, we introduce variants called \noeq and \noequ. These solutions solely contemplate normal movement between grid cells, with synthetic trajectories perpetually continuing without termination. The size adjustment process is also not used and the synthetic dataset is randomly initialized on the first timestamp. As illustrated in \autoref{tab:ablation}, the performance in some global and semantic level metrics is seemingly unaffected. This is because these metrics are largely indifferent to the nuances of entering/quitting events and predominantly rely on the overall mobility patterns that our mobility model can effectively capture. However, their performance in trajectory level metrics severely degrades compared to \mymethod. This is because without entering/quitting events,  each single synthetic trajectory fuses features of diverse real-world users. 
Moreover, the generated trajectories lack authenticity, considering their extremely long travel distance and unrealistic start/end positions.

\mypara{Impact of Allocation Strategies}  \label{sec:ablation_allocation}
We also conduct experiments on different allocation strategies, with results depicted in \autoref{fig:allocation}. 
Generally, Adaptive is more robust and achieves better overall utility with a relatively modest advantage. In T-Drive, they outperform competitors in all metrics, while in Oldenburg, Sample showcases the best transition and query error. This can be attributed to the relatively steady data changes in Oldenburg, making periodic model updates a promising strategy.  Besides, the privacy budget/users in adaptive methods are not guaranteed to be used up, leading to potentially underutilized resources. Such observations align with recent studies~\cite{vldb23_benchmark}, which find that data-independent methods suffice for certain analysis tasks. However, though the transition and query error of Sample in the Oldenburg dataset may appear favorable, its performance significantly deteriorates in terms of Kendall-tau. In comparison, the adaptive strategies consistently deliver strong performance across various tested metrics and datasets. We also evaluated the efficiency of the above strategies and the difference is negligible ($< 0.01$ seconds in T-Drive and $< 0.1$ seconds in other datasets). Therefore, adaptive methods remain a suitable choice for stable performance.

\mypara{Impact of Window Size $w$}
\autoref{fig:param_w} shows the utility under different window sizes. Due to space limits, we choose one metric from each level and demonstrate the results in T-Drive and Oldenburg datasets. In general, \mymethod consistently outperforms other strategies across all $w$s and datasets. This attests to the effectiveness of our \dmu mechanism and allocation strategy in leveraging the privacy budget and enhancing utility.  We also find that our performance displays a mild decline with a larger $w$. This is attributed to the expanded number of timestamps needing protection in a window, and the resource allocated to each timestamp decreases correspondingly. Nevertheless, \mymethod has shown its superior robustness compared to LBA and LPA, while LBD and LPD maintain a more stable performance, as their exponential decrease strategy is independent of $w$.

\mypara{Impact of Evaluation Time Range $\varphi$}
\autoref{fig:param_phi} illustrates the influence of $\varphi$. Generally, \mymethod achieves the best performance across all datasets and $\varphi$s. This indicates that \mymethod is suitable for both short-term spatial dependencies and long-term mobility patterns. For hotspot NDCG and pattern F1, the performance of \mymethod improves with a larger $\varphi$, while the competitors exhibit a relatively stable performance and even have utility degradation when $\varphi$ increases. This suggests that \mymethod achieves greater improvement when performing mid-term and long-term analysis tasks. For query error, the observed bulge in Oldenburg dataset is attributed to its skewness and the influence of the sanity bound in the evaluation function, which is designed to mitigate the influence of queries with extremely small counts~\cite{ccs18_adatrace, vldb23_ldptrace}. In a more uniform dataset such as T-Drive, it is relatively unaffected by $\varphi$.

\subsection{Efficiency and Scalability}\label{sec:scalability}
\mypara{Component Efficiency}
We first evaluate the efficiency of each sub-procedure. Specifically, we separate our framework into four processes: (i) \textit{User-side Computation}, (ii) \textit{Mobility Model Construction},  (iii) \textit{Dynamic Mobility Update} and (iv) \textit{Real-time synthesis}. 
Other operations such as privacy allocation take negligible processing time; thus, we omit them in our results. As most of the procedures in \mymethodb and \mymethodu are the same, we only report the efficiency of \mymethodu due to space limits. 
\autoref{tab:efficiency} illustrates that the synthesis takes most of the computational cost, which needs a complexity of $O(|\mathcal{T}_{syn}|)$ to generate locations and perform size adjustment. Overall, 
 our average processing time per timestamp is significantly lower than the duration of consecutive timestamps, rendering it practical for real-time analysis.

\mypara{Impact of $K$}
As analyzed in \autoref{sec:analysis}, the computational overhead is directly affected by the discretization granularity $K$. \autoref{fig:param_k} indicates that the processing time exhibits a mild increase with a larger $K$, which is attributed to the expanded grid space $\mathcal{C}$. Nevertheless, the average running time still remains reasonable and is suitable of real-time processing. The change of $K$ also affects the utility, we follow~\cite{vldb23_ldptrace} to evaluate its impact with query error. As depicted in \autoref{fig:param_k}, a larger or smaller $K$ will both lead to performance degradation. A coarser granularity will lead to uninformative mobility patterns, since many points that fall in a large region will be converted to a single cell. On the contrary, a finer granularity significantly increases the domain of transition status, perturbation on this vast domain might introduce excessive noise.

\mypara{Scalability}
We finally analyze the scalability by varying the dataset size. The average running time per timestamp is illustrated in \autoref{fig:scalability}. As observed, the processing time displays a linear growth as dataset size increases. 
Furthermore, population-division strategy exhibited slightly lower running times, primarily because only a portion of users are involved in reporting at each timestamp, thus reducing time costs.


\begin{table}[!t]
    \centering
    \caption{Component efficiency of \mymethodu. We report the average processing time per timestamp in seconds.}
    \vspace{-5pt}
    \resizebox{0.48\textwidth}{!}{
    \begin{tabular}{lccc}
    \toprule
         \textbf{Procedure} &\textbf{T-Drive} &\textbf{Oldenburg} &\textbf{SanJoaquin} \\ \midrule
         User-side Computation &0.0013 &0.0046 &0.0086 \\ 
         Mobility Model Construction &0.0002 &0.0002 &0.0002 \\
         Dynamic Mobility Update &0.0008 &0.0008 &0.0007 \\
         Real-time Synthesis &0.1828 &1.6467 &2.9463 \\ \midrule
         \textbf{Total} &\textbf{0.1851} &\textbf{1.6523} &\textbf{2.9558}\\
          \bottomrule
    \end{tabular}
    }
    \label{tab:efficiency}
    \vspace{-7mm}
\end{table}

\begin{figure}[t]
    \centering
    \includegraphics[width=0.48\textwidth]{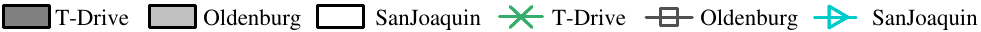}
    \includegraphics[width=0.49\textwidth]{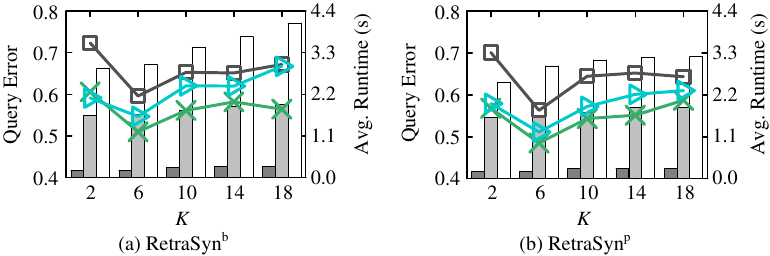}
    \vspace{-6mm}
    \caption{Impact of discretization granularity $K$. The histogram represents the running time and the lines represent utility.}
     \vspace{-4mm}
    \label{fig:param_k}
\end{figure}

\begin{figure}[t]
    \centering
    \includegraphics[width=0.22\textwidth]{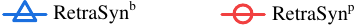}
    \includegraphics[width=0.49\textwidth]{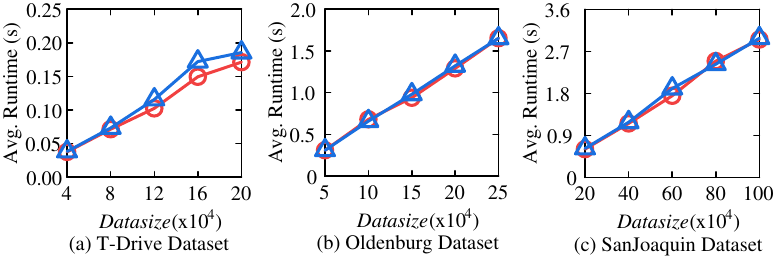}
     \vspace{-6mm}
    \caption{Scalability Evaluation.}
    \vspace{-6mm}
    \label{fig:scalability}
\end{figure}
\section{Related Work} \label{sec:related_work}
In this section, we survey the existing work for publishing general stream data and historical trajectories.  Since LDP is a variant of differential privacy (DP, also known as centralized DP), which relies on a trusted data curator to publish the sensitive data, we also introduce related work under DP.

\subsection{Stream Release with DP/LDP}
 Differential privacy has been widely applied to streaming scenarios where data is continuously released. The original privacy notion in stream publishing include \textit{event-level}~\cite{ccs17_pegasus, soda10_dwork, stoc10_dwork, icdt13_eventlevel, tissec11_eventlevel, ccs21_tops, nips18_thresh,cikm15_eventlevel,icde17_caoyang} privacy and \textit{user-level}~\cite{tkde13_fast,cikm15_dsat,cikm12_userlevel,soda19_userlevel,dbsec13_userlevel,vldb21cgm,icdew23_caoyang,sp23_userlevel} privacy. 

\mypara{Event-level Privacy}
It is designed to protect the privacy of individual timestamps. For instance, PeGaSus~\cite{ccs17_pegasus} 
utilizes a perturb-group-smooth architecture to answer multiple stream queries. 
ToPS~\cite{ccs21_tops} mitigates the problem of unbounded maximum values in real-valued data stream publication via a threshold estimation mechanism. 
The framework is applicable in both centralized and local settings of event-level privacy.
However, event-level privacy suffers from inferior privacy protection as it merely protects an individual timestamp.

\mypara{User-level Privacy}
In contrast to event-level privacy, it aims to hide all events of any user. A notable work is FAST~\cite{tkde13_fast}, which tackles user-level privacy by employing a sampling strategy to select the appropriate publishing timestamps. The sampling interval is adaptively adjusted according to the data dynamics. While user-level privacy protects the entire stream of each user, it is not feasible for infinite data streams.

\mypara{$w$-event Privacy}
To balance the privacy and utility of event-level and user-level privacy, $w$-event DP~\cite{vldb14_bdba} is proposed for infinite streams, which protects each running window of at most $w$ timestamps. The authors further present two algorithms called Budget Distribution (BD) and Budget Absorption (BA) to adaptively allocate the privacy budget to the selected timestamps for perturbation. 
Researchers have further explored the application of $w$-event DP in the centralized setting~\cite{tdsc18_rescuedp,ipccc19_adapub,cikm15_dsat,apscc16_gevent,securecomm16_secweb,tvt19-rptr,tmc18_wevent,rgp_wevent}. One notable example is RescueDP~\cite{tdsc18_rescuedp}, which builds upon the FAST framework. Moreover, RescueDP incorporates a grouping strategy to group the dimensions with similar statistics and changing trends, in order to reduce the noise added to statistics with small values.

Recently, the $w$-event privacy has also been applied to LDP~\cite{infocom20_patternldp,sigmod22_ldpids,bigdata18_wevent,trustcom19_wevent}. Wang \etal~\cite{infocom20_patternldp} propose a pattern-aware stream publishing model, which incorporates the correlations between consecutive data points in time series. However, it achieves a metric-based LDP rather than pure $\epsilon$-LDP. 
The state-of-the-art method based on pure $\epsilon$-LDP is LDP-IDS~\cite{sigmod22_ldpids}, which adopts the idea of BD and BA to the local setting. The authors also present two population division-based mechanisms that focus on separating participant users rather than splitting the privacy budget for each subroutine. 

Compared to existing stream release solutions, our method fully harnesses spatial-temporal context information and has superior versatility for various downstream tasks. 

\subsection{Historical Trajectory Release with DP/LDP}
Private publication of trajectory and other location data~\cite{vldb22_snh,ccs21_directional,sigmod23_neural} with DP/LDP has been a prominent research area for over a decade.
Existing DP/LDP solutions that are applicable in trajectory publication can be broadly classified into point perturbation methods~\cite{ccs13_geoindis,cikm13_pointperturb,icde22_gl, vldb21_ngram, vldb23_direction} and synthesis-based methods~\cite{sp16_sglt,kdd12_ngram,sstd21_cormode,ccs18_adatrace,gursoy2020utility,vldb15_dpt,mdm22,vldb23_ldptrace,vldb22_mtnet}.

\mypara{Point Perturbation Methods}
These methods add noise to individual spatial points in trajectory data before they are published or used for analysis tasks. For example, NGRAM~\cite{vldb21_ngram} is a framework proposed for privacy trajectory sharing under LDP. 
NGRAM leverages auxiliary external knowledge and overlapped n-grams to preserve the spatial-temporal-category information of the original data. 
ATP~\cite{vldb23_direction} combines the direction information with trajectory perturbation in LDP. 

\mypara{Synthesis Methods}
These methods focus on generating synthetic trajectories that collectively exhibit a high resemblance to the real dataset while ensuring privacy.
AdaTrace~\cite{ccs18_adatrace} extracts four key spatial features from original trajectories and incorporates them into a generative synthesizer. 
LDPTrace~\cite{vldb23_ldptrace},  the state-of-the-art trajectory synthesis framework under LDP, takes into account three crucial spatial patterns to form a probabilistic distribution, including the beginning/terminated points, intra-trajectory transitions and trajectory length. Afterwards, a Markov-based generative model performs synthesis utilizing the extracted mobility patterns.

By comparison,
\mymethod constructs and maintains a dynamic global mobility model. The \dmu mechanism and adaptive allocation strategy enable real-time processing.

\section{Conclusions} \label{sec:conclusion}
We propose \mymethod, a novel real-time trajectory synthesis framework under LDP. To ensure utility, we harness the spatial-temporal context to build a global mobility model, extracting the mobility patterns within the original streams. An optimization-based mechanism is employed to dynamically update the mobility model while reducing the introduced error.  Furthermore, we emulate the behavior of real-world travelers and explore the allocation strategy in realistic scenarios to improve authenticity. Extensive experiments are conducted and demonstrate the superiority and versatility of \mymethod. In the future, we aim to study acceleration techniques (\eg parallel computing) to further enhance the efficiency, and to integrate \mymethod into distributed trajectory management systems.

\balance
\bibliographystyle{abbrv}
\bibliography{ref}
\end{document}